\newtheorem{theorem}{Theorem}
\newtheorem{proposition}{Proposition}
\begin{document}
\title{Relay Selection for OFDM Wireless Systems under Asymmetric Information: A Contract-Theory Based Approach}
\author{Ziaul~Hasan,~\IEEEmembership{Member,~IEEE,}
        Vijay~K.~Bhargava,~\IEEEmembership{Fellow,~IEEE,}

\thanks{This research was supported by the Natural Sciences and Engineering Research Council of Canada (NSERC) under their strategic project award program and in part by Alexander Graham Bell Canadian Graduate Scholarship. This work was presented in part at IEEE WCNC 2012, Paris, France. Z. Hasan, and V. K.~Bhargava are with the Department of Electrical and Computer Engineering, University of British Columbia, Vancouver, Canada. (\{ziaulh, vijayb\}@ece.ubc.ca). Z. Hasan is thankful to Prof. Abbas Jamalipour, University of Sydney, NSW, Australia for inviting him to spend time in his lab and to Prof. Md. Jahangir Hossain for his valuable feedback and comments to improve the manuscript.}}

\maketitle

\begin{abstract}
User cooperation although improves performance of wireless systems, it requires incentives for the potential cooperating nodes to spend their energy acting as relays. Moreover, these potential relays are better informed than the source about their transmission costs, which depend on the exact channel conditions on their relay-destination links. This results in asymmetry of available information between the source and the relays. In this paper, we use contract theory to tackle the problem of relay selection under asymmetric information in OFDM-based cooperative wireless system that employs decode-and-forward (DF) relaying. We first design incentive compatible offers/contracts, consisting of a menu of payments and desired signal-to-noise-ratios (SNR)s at the destination. The source then broadcasts this menu to nearby mobile nodes. The nearby mobile nodes which are willing to relay, notify back the source with the contracts they agree to accept in each subcarrier. We show that when the source is under a budget constraint, the problem of relay selection in each subcarrier with the goal of maximizing capacity is a nonlinear non-separable knapsack problem. We propose a heuristic relay selection scheme to solve this problem. We compare the performance of our overall mechanism and the heuristic solution with a simple relay selection scheme. Selected numerical results show that our solution performs better and is close to optimal. The benefits of the overall mechanism introduced in this thesis is that it is simple to implement, needs limited interaction with potential relays and hence it requires minimal signalling overhead.
\end{abstract}

\begin{keywords}
relay selection, OFDM, asymmetric information, adverse selection, contract theory, nonlinear knapsack problem.
\end{keywords}

\section{Introduction}
Relay-assisted cooperation in wireless networks plays a key role in improving the overall efficiency of wireless networks by improving the system throughput, energy efficiency, spectrum usage, coverage, channel reliability and network cost reduction via spatial multiplexing and achieving diversity gains. Relay-based cellular network architectures have also been considered  for next generation wireless systems such as 3GPP Long Term Evolution (LTE) and IEEE 802.16j mobile WiMAX \cite{3gpp, 802}. 

Cooperation via relays can potentially assist a source node by forwarding its data to the destination either by amplify-and-forward (AF) or by decode-and-forward (DF) relay protocols. This cooperation can be achieved either by installing fixed relays within the network coverage area or by making the other mobile nodes act as relays. The latter scenario, also known as user cooperation, is gaining attention, because of the minimal changes required in existing infrastructure and because it has been shown to not only increase the data rates but also to make the achievable rates less sensitive to channel variations \cite{zhuang, sendonaris}. While user cooperation eliminates the cost of installing additional relay nodes, it increases the complexity of the overall system for several reasons. First, various dynamic resource allocation algorithms require near complete channel state information (CSI) from potential users assisting as relays. In the absence of this information, it is a challenge to design algorithms that dynamically select mobile users as potential relays \cite{sadek}. Most relay selection algorithms for cooperative networks assume complete CSI \cite{krikidis, beres}. However, this information is private to mobile users and they may not be willing to share this information. This results in an {\em asymmetry} of available information between the source mobile user and the potential relays. Secondly, user cooperation poses a logistic challenge because the increased rate of one user comes at the expense of consumption of the limited resources of the relaying user (e.g. battery, power, bandwidth etc.). The potential relays are usually selfish nodes that could belong to different network entities/operators and hence may not be willing to cooperate without any additional incentives. 

While relay selection with partial CSI have been explored by several authors e.g. \cite{ahmed, prs}, no incentive-based mechanisms has been considered in these and other related works. To tackle this problem, game-theoretic models have generally been suggested for cooperative systems that are either reputation-based, resource exchange-based, or pricing based \cite[references therein]{dejun}. However, there are still many challenges in applying game theoretic solutions to cooperative systems including investigating the existence, uniqueness, computation and efficiency of the Nash Equilibrium, as well as addressing signalling overheads \cite{dejun}. Moreover, to the best of our knowledge, the problem of relay selection under asymmetric information together with incentive-based mechanisms for OFDM-based cooperative systems, had not been studied. Therefore, the objective of this paper is to address this problem with simple pricing-based incentive mechanisms with minimal signalling overheads. 

Orthogonal Frequency Division Multiplexing (OFDM) has been adopted by many modern communication systems as standard multi-carrier modulation technology due to its ability to handle severe channel conditions without complex equalization filters. In a dual-hop OFDM-based cooperative communication system, choice of a relaying protocol, i.e., the interface between source-relay land the relay-destination links, is an important factor in defining the performance and complexity of such systems \cite{riihonen}. The authors in \cite{riihonen} present and analyze a comprehensive set of relaying protocols namely; time-domain versus frequency-domain processing, resource block-wise versus symbol-wise processing, AF versus DF, reordering/pairing of resource blocks, buffering over fading states, and optimization of time sharing. In this paper, we consider a OFDM-based dual-hop cooperative communication system that uses DF relaying due to its obvious performance advantages over AF relaying. We assume no buffering and equal time sharing for the two hops because of the low complexity and simple analysis. Pairing of subcarriers has not been considered explicitly because under the assumptions that we discuss later, this information is not required for the proposed solution to the relay selection problem under asymmetric information.

We use a simple {\em principal-agent model} from microeconomics for source and relay, where source acts as the principal and relay is an agent \cite{herbert}. In such a model, the bargaining power is kept with the principal, and the agent can either accept or decline an offer proposed by the principal. Such a model reduces the interaction needed between a source and a relay, and only the users willing to relay can participate. These offers or {\em contracts} proposed to relays are in the form of a menu of renumeration or monetary transfers for a specified service in each subcarrier, which in this paper is taken as desired signal-to-noise ratio (SNR) at destination. Assuming that the source has limited information about the relay-agents (e.g. joint probability distribution of CSI), we use ``contract theory" to design these set of {\em contracts} which are designed for different {\em types} of users (we will elaborate on ``types" later on in this paper). Contract theory is a field of economics that studies how economic players or agents create mutually agreeable contracts or arrangements in presence of asymmetric or incomplete information \cite{salanie}. Contract-based solutions for spectrum sharing in wireless systems have recently been discussed in \cite{duan, kalathil, kalathil2, gao}. The challenge behind contract-based approach for wireless systems is due to the fact that the agents can lie to principal about their individual information in order to increase their utility and hence a contract in this situation tries to create incentives for agents to report their information truthfully.

Once the prospective relays confirm to the source which contracts (i.e. payments and SNRs) they are willing to accept for each subcarrier, the only problem that remains is to select appropriate relays in each subcarrier such that the overall capacity is maximized for the source under overall budget constraint. Without designing these contracts and having no information about exact channel conditions of the potential relays, it would be difficult for the source to optimally choose relays and offer them suitable renumeration. This is because relays can lie about their channel conditions and hence the source could possibly make inefficient payments with unsatisfactory performance. As we will see later in this paper, this relay selection now becomes a nonlinear non-separable convex knapsack problem which cannot be solved with reasonable computational complexity and we therefore suggest a heuristic method to solve it. We then compare the performance of our overall mechanism and heuristic relay selection solution with a simple relay selection scheme and show that not only the proposed solution performs better but also it is near optimal under most general settings. The proposed scheme is very simple to implement in practical systems, requires almost no information about relays at source, limits the computational overheads only at the source and requires very limited interaction with potential relays.

A list of terms and definitions used in the paper is given in Table \ref{tb:terms}. The organization of the paper is as follows. We will first present our system model, problem description and solution approach in Section \ref{model}. The utility models for contract design will be discussed in Section \ref{util}. In Section \ref{fbcase}, we will discuss a contract formulation under a complete or perfect information. In Section \ref{sbcase}, we will show how to obtain an optimal contract design under asymmetric information. Relay section under a budget constraint will be discussed in Section \ref{relayselection}. Numerical results will be presented in Section \ref{simulation} and conclusions will be drawn in Section \ref{concl}.

\begin{table*}[ht]
\begin{center}
\caption{Terminology Used}
\label {tb:terms}
\begin{tabular}{p{3.4cm}|p{13cm}}
\hline
{\em principal-agent model} & microeconomic model where payoff to the principal (source) depends on an action taken by the relay-agent and the bargaining power is kept with the principal.\\
\hline
{\em asymmetric information} & one party (relay-agent) has more or better information than the other (source).\\
\hline
{\em contract} & a tuple consisting of a targeted SNR that a relay of a certain type on a particular subcarrier can provide on that subcarrier, and a corresponding payment that the source promises to make to that relay.\\
\hline
{\em private information} & relay agent's instantaneous channel gains between relay-destination link on all subcarriers.\\
\hline
{\em type} & relay agentÕs private information on a particular subcarrier, i.e., channel gain on a relay-destination link on that subcarrier.\\
\hline
{\em transfer} & payment made to a relay node by a source in lieu of a targeted SNR at destination.\\
\hline
{\em reservation utility} & minimum utility the relay-agent will get by not accepting a contract.\\
\hline
{\em incentive compatible} & relay-agent chooses the contract designed for his type only.\\
\hline
{\em individually rational} & contract designed for each type gives the relay-agent at least as much utility as it would get by not accepting the offer.\\
\hline
{\em revelation principle} & Every equilibrium outcome with a mechanism is realized by a payoff-equivalent revelation mechanism that has an equilibrium where the relay-agents truthfully report their types.\\
\hline
{\em single crossing property} & indifference curves for two different types of relay agents cannot intersect more than once.\\
\hline
{\em information rent} & positive surplus that the relay receives by accepting a contract.\\
\hline
\end{tabular}
\end{center}
\end{table*} 

\section{System Model}
\label{model}
\subsection{Problem Description}
We consider a typical cooperative network scenario in which a particular mobile node acting as a source, wants to transmit a block of data to a destination node with the help of some nearby mobile nodes that can act as relays. Fig. \ref{fig:systemmodel} shows an example of such a cooperative network system.  We assume that the source uses an OFDM-based multi-carrier system for the transmission technology with a total number of $N$ subcarriers. We posit that there are $M$ such mobile stations that could be the possible relay candidates. Based on its channel conditions, each relay node incurs a certain cost to provide a pre-specified SNR to the source on a particular subcarrier at the destination. Since there is no obligation for these mobile stations to forward its data towards the destination, the source mobile node must provide some incentives such as some monetary payment or credits, to these possible relay candidates. However, in such a system, it is practical to assume that the source is not only unaware of the number of possible relay candidates $M$ but also the exact channel conditions on all the relay-destination links on each subcarrier. In absence of this information, the source does not know which relay nodes to choose and how much it should pay to each relaying node because relaying nodes are regular mobile users and are therefore selfish, and can potentially lie about their actual cost of transmission. The relay node's {\em private information}, i.e., instantaneous channel gains between relay-destination link on all subcarriers can be expressed as a vector $\boldsymbol{\Psi}_m=\{\theta_{1m}, \theta_{2m}, \cdots \theta_{Nm}\}$, where $\theta_{im}$ denotes the channel gain for the $m$th relay on the $i$th subcarrier for the corresponding relay-destination link. A relay node's channel gain on a relay-destination link on a particular subcarrier will henceforth be called its {\em type}\footnote{To avoid confusion, we would like to clarify that {\em type} of a relay is not to be confused with relay types in LTE systems which classifies relay nodes as type 1, 1a, 1b or 2\cite{yangyang}. In this paper, usage of {\em type} is synonymous to the equivalent standard definitions in the theory of economics of asymmetric information \cite{salanie}.}
on that subcarrier and we will subsequently use the symbol $\theta$ without subscripts to indicate types in general. We assume these channel gains to be slow-varying, which means that they would remain constant for both transmission and relaying time slots.

\begin{figure}
\centering
\includegraphics[width=3.5in]{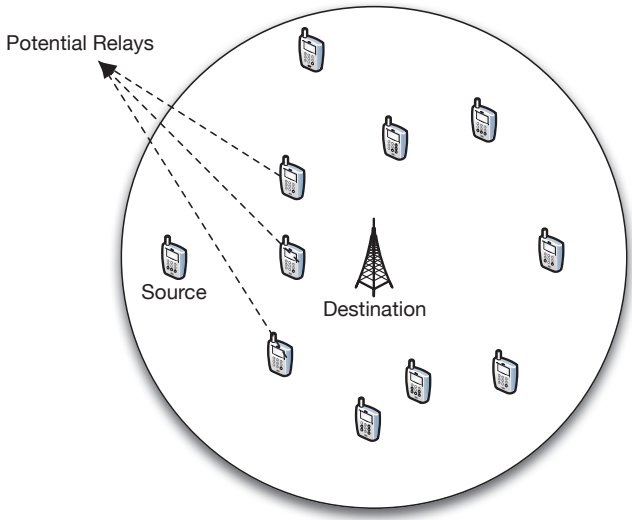}
\caption{A typical cooperative wireless network system with user cooperation}
\label{fig:systemmodel}
\end{figure}

Although the source is unaware of relay node's exact type or channel condition, we assume that it has information about the joint distribution of the types and the set $\boldsymbol{\Theta}\subset\mathbb{R}^N_{\geq0}$ from which these type vectors are drawn from. This is a reasonable assumption because source can learn about this distribution through the knowledge of fading environment parameters between relays and destination and these parameters could be provided to the source with limited feedback from the destination. However, as we will later observe that knowledge of this distribution only affects the optimality of the designed solution hence the source can begin transmission with just a priori belief about this distribution. 

Moreover, we assume that the source has a maximum budget $\mathcal{T}$ in one time frame for the total payments or {\em transfers} that it can make to the relay nodes over all subcarriers. We further assume that the relays utilize space-time coded cooperative diversity for multi-relay transmission on the same subcarrier. Based on these assumptions, the problem can be described as follows: the source has to effectively choose a set of relays for every subcarrier in order to maximize its overall throughput in a given time frame, make optimal transfers to these relays without the knowledge of relays' private information, while making sure total transfers do not exceed the overall budget constraint $\mathcal{T}$. 

\subsection{Two Part Contract-based Solution}
\label{sub:twopart}
This overall problem of selecting relays on each subcarrier while providing transfers to different relays is quite difficult to solve because of the overall budget constraint, multi-dimensional information types and information asymmetry (difference in available information between the source and the potential relays). However with the help of contract theory, we will attempt to solve it by breaking the problem down into two parts: 
\begin{enumerate}
\item {\em Contract Design:} A {\em contract}, is defined as a tuple consisting of a targeted SNR at destination on a certain subcarrier that a relay can provide and a corresponding guaranteed transfer or monetary incentive that source promises to make. The source first designs a set of common contracts applicable to all subcarriers without using any specific budget constraints, and broadcasts them to all relay candidates using some established protocol. Since, the number of contracts broadcasted are independent of number of subcarriers, the signalling overhead will be small. The relays listen to the contracts offered by source then respond back identifying the contracts they are willing to accept for each subcarrier. This part also has less signalling overhead than regular communication because relays do not have to provide their actual channel feedback.

\item {\em Relay Selection:} Based on the accepted contracts by the relays on each subcarrier, source then chooses an optimum set of relays for each subcarrier considering its budget constraint, while maximizing its expected capacity. We assume that the source instructs the selected relays with space-time-codes for each subcarrier, and hence relays can transmit simultaneously on the same subcarrier. Each relay node hence acts as a ``virtual antenna" and sends out the signal in a Multiple Input Multiple Output (MIMO) setting. All the relay transmissions hence occur in the same subcarrier and superimpose at the destination so that the overall SNR in each subcarrier is the summation of SNRs provided by all selected relays \cite{laneman,molisch}.
\end{enumerate}
An important assumption for this proposed two-part contract based solution is that the overall budget constraint $\mathcal{T}$ is sufficiently larger than the average cost of transmission for relay in each subcarrier. This assumption is reasonable because source expects the transmission to happen at least on a few subcarriers, hence source must set a budget constraint several times larger than the average cost of transmission for relay in each subcarrier. This assumption ensures that the monetary incentive or transfer for any contract pair would be sufficiently smaller that $\mathcal{T}$.  Fig. \ref{fig:relayselection} illustrates the complete representation of this mechanism in a three step process for a sample system.
\begin{figure*}[ht]
\centering

\subfigure[Step 1: Source S broadcasts a set of contracts to all nearby relays M1, M2, M3, M4.]{
\includegraphics[width=2.0in]{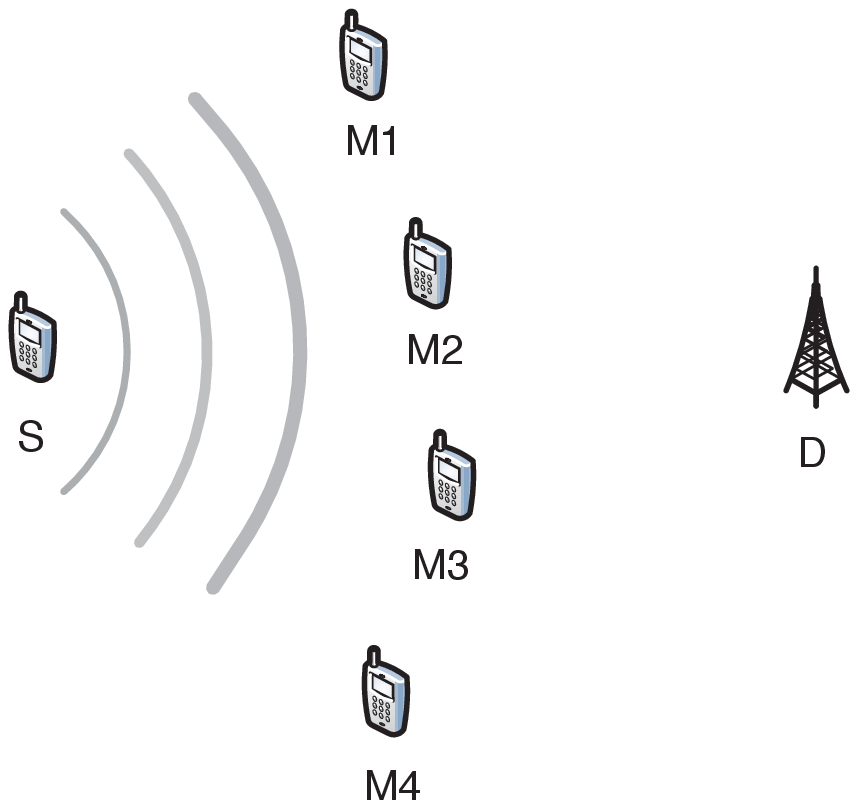}
\label{fig:sm1}
}
\subfigure[Step 2: Relays M1, M2 and M3 accept certain contracts on certain subcarriers and respond to source indicating which contracts they are willing to accept.]{
\includegraphics[width=2.0in]{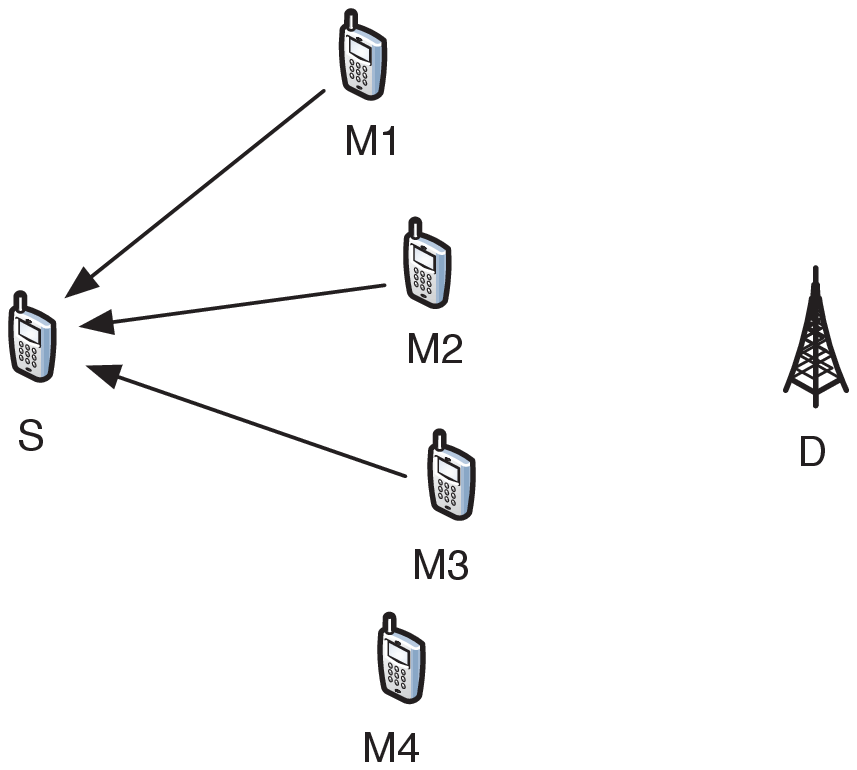}
\label{fig:sm1}
}
\subfigure[Step 3: Source selects relays M1 and M2, makes transfers, sends instructions and messages, and relays help transmit to destination D.]{
\includegraphics[width=2.0in]{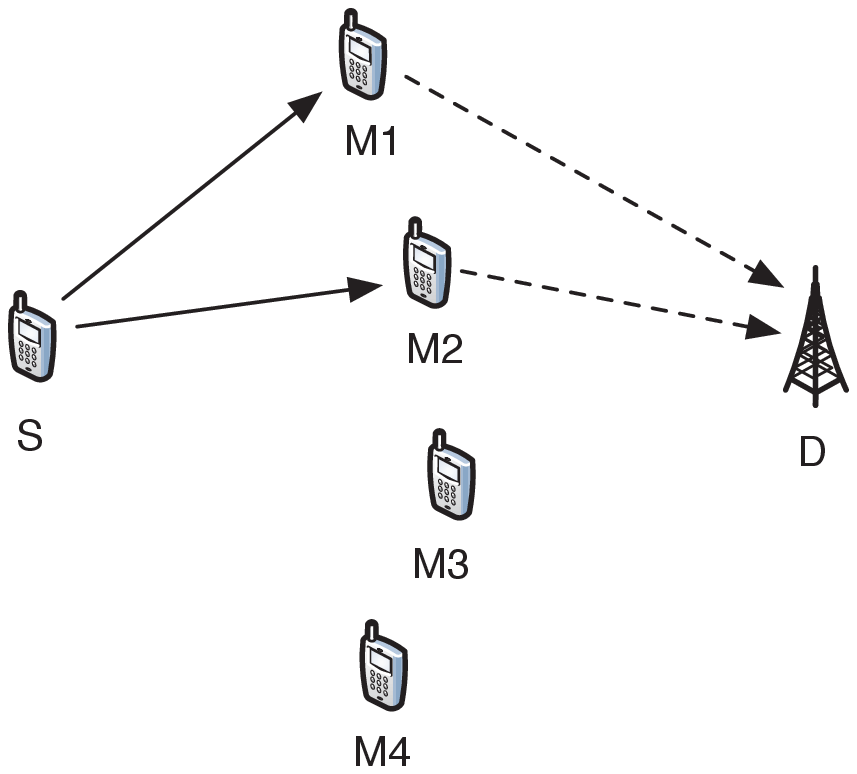}
\label{fig:sm3}
}
\caption[Optional caption for list of figures]{ A contract based cooperative communication and relay selection mechanism}
\label{fig:relayselection}
\end{figure*}

\section{Utility Models for Contract Design}
\label{util}
As we explained in section \ref{sub:twopart}, a {\em contract} is defined as a tuple consisting of a targeted SNR that a relay of a certain type on a particular subcarrier can provide on that subcarrier, and a corresponding payment that the source promises to make to that relay. For the purpose of contract design, we will focus our analysis for a general relay type $\theta$ (channel gain $\theta$ on the relay-destination link) on any subcarrier. To begin with, source first needs to design a set of common contract pairs $(\gamma(\theta), t(\theta))$ for a given range of types $\theta$ that are {\em incentive compatible} (IC) and {\em individually rational} (IR) on all subcarriers. Here, $\gamma(\theta)$ is the SNR that the relay of type $\theta$ can provide at the destination and $t(\theta)$ is the transfer (e.g. monetary incentive) that the source node makes to the relay for a particular subcarrier. {\em Incentive compatible} means that the relay-agent chooses the contract designed for his type only. {\em Individually rational} means that the contract designed for each type gives the relay-agent at least as much utility as it would get by not accepting the offer or in other words, by not relaying. This minimum utility is also known as {\em reservation utility} and we will take it as 0 in the rest of the paper. Using the {\em revelation principle}, we can focus our analysis on contract designs where agents declare their types truthfully or in other words, we can directly consider types while designing contracts \cite{salanie}. We will now make a practical assumption that the relay nodes do not have a prior knowledge of source's budget $\mathcal{T}$ and the number of other relay candidates. In other words, the relay nodes have no way of knowing that whether the contract they will accept will be executed or not by the source. Assuming DF relaying with repetition coding, the SNR of the source-relay-destination link on a certain subcarrier will be given by $\min\{\gamma_{SR},\gamma_{RD}+\gamma_{SD}\}$, where $\gamma_{SR}$ is source-relay SNR and $\gamma_{RD}$ is relay-destination SNR \cite{molisch}. Ignoring the direct path, the source-relay link can be safely assumed to be stronger because source will only consider nearby mobile nodes who can decode the source information, as relay candidates and hence $\gamma_{RD}$ will be the bottleneck \cite{duan}. Moreover, due to the assumption that source-relay link is always stronger, subcarrier pairing has also been ignored in the current problem formulation. Therefore from now on, we will approximate $\gamma_{RD}$ as the net SNR for the overall link and will drop the subscript and call it just $\gamma$ henceforth. 

We now define a quasi-separable utility for the source with a contract pair $(\gamma(\theta), t(\theta))$ on each subcarrier as follows:
\begin{equation}
U(\gamma(\theta))-t(\theta)
\end{equation}
where $U(\cdot)$ is a concave function that gives the utility that the source gets with an SNR $\gamma$ between a relay and destination on some subcarrier, when it is using only this particular relay, and this utility can be given by Shannon capacity formula:
\begin{equation}
U(\gamma)={1\over 2}\log_2\left(1+\min\{\gamma_{SR}, \gamma+\gamma_{SD}\}\right)\approx{1\over 2}\log_2\left(1+\gamma\right).
\end{equation}
Here, the half factor is used to account for half duplexity of the relaying protocol and approximation is based on the argument in the previous paragraph. The overall utility of a relay candidate of type $\theta$ that announces its type truthfully can also be described by a quasi-separable function given by the difference between transfers and cost of transmission:
\begin{equation}
t(\theta)-\mathcal{C}(\gamma(\theta),\theta)
\end{equation}
where $\mathcal{C}(\gamma,\theta)$ is the cost for relay of type $\theta$ to provide SNR $\gamma$ at the destination on some subcarrier. This cost could be the summation of the cost of per unit power used for relaying in addition to fixed decoding costs. Ignoring the decoding costs, this cost can simply be given by:
\begin{equation}
\mathcal{C}(\gamma, \theta)={c\gamma\over\theta}
\end{equation}
where $c$ is a positive number denoting cost per unit power and $\gamma\over\theta$ is the transmitted power by the relay. Conveniently, since ${\partial^2\mathcal{C}(\gamma, \theta)\over\partial\gamma\partial\theta}<0$, the cost function satisfies a form of Spence-Mirrlees Condition or {\em single crossing property} \cite{salanie}. What this means is that indifference curves $(\gamma, t)$ (plot of contracts for which a relay gets constant utility) for two different types of relay agents cannot intersect more than once. Moreover, the economic significance of this condition is that the relay agents of higher types are willing to provide better SNR $\gamma$ for a smaller increase in transfer $t$. In the following sections, we will see how to solve the problem of contract design for two specific cases: complete information scenario and incomplete information scenario.

\section{Contract Formulation under Complete and Asymmetric Information}
\subsection{Contract Formulation under complete Information: First-best scenario}
\label{fbcase}
This is the scenario when the source has the precise information of the relay types vector $\boldsymbol{\Psi}_m$, i.e., CSI between relay-destination link in each subcarrier. This is also known as the first-best scenario and will be used for benchmarking, because this is the ideal case for source, as it is in the best position to make the maximum use of the available potential relays. Therefore in this scenario, source only has to ensure that each relay agent is ready to accept the contract that he is about to offer, or in other words he only has to satisfy relay-agent's individual rationality condition in each subcarrier. The source's objective problem for a relay-agent type of $\theta$ in a certain subcarrier can now be written as:
\begin{eqnarray}
\label{optfb}
\max_{\gamma(\theta),t(\theta)} U(\gamma(\theta))-t(\theta)\\
\label{sbfb}
\text{subject to}\hspace{0.3 in} t(\theta)-{c\gamma(\theta)\over\theta}\ge0.
\end{eqnarray}
The source hence gives the relay-agent zero utility in order to maximize its own utility, i.e., the source extracts all the surplus from the relay-agent. Therefore, setting (\ref{sbfb}) to equality, substituting in (\ref{optfb}) and then differentiating the objective w.r.t $\gamma$ and finally equating it to zero, the optimal first-best contract $(\gamma(\theta),t(\theta))$ for a relay of type $\theta$ is given by $({\theta\over 2c\ln2}-1, {1\over2\ln2}-{c\over\theta})$. 

\subsection{Contract Formulation under Asymmetric Information: Second-best scenario}
\label{sbcase}
\subsubsection{Theoretical analysis with continuous relay-agent types}
In this section, we will analyze the structure of the solution using some standard theoretical analysis. In this case, we assume that the types of relays for all subcarriers are continuous and belong to a set $\Theta=[\underline{\theta},\bar\theta]$ and has a joint probability distribution $f(\theta_1,\theta_2, \cdots\theta_N)$ (with $F(\theta_1,\theta_2,\cdots\theta_N)$ as cumulative density function), which is known to the source node.
Let $\mathcal{P}(\hat\theta,\theta)$ be the profit or utility achieved by relay agent of type $\theta$ on a certain subcarrier who announces his type as $\hat\theta$. The profit is given by the following function:
\begin{equation}
\mathcal{P}(\hat\theta,\theta)=t(\hat\theta)-\mathcal{C}(\gamma(\hat\theta),\theta).
\end{equation}
The contract $(\gamma(\theta),t(\theta))$ satisfies the incentive constraints (IC) if and only if being truthful gives a relay node at least as much utility as it gets by lying, i.e.,
\begin{equation}
\mathcal{P}(\theta,\theta)\ge\mathcal{P}(\hat\theta,\theta), \;\;\;\;\forall (\theta,\hat\theta)\in \Theta^2 \hspace{0.3 in}\text{(IC)}.
\end{equation}
Hence, for the contract to be {\em incentive compatible}, the following first and second order conditions must hold:
\begin{eqnarray}
\forall \theta\in \Theta,
 \begin{cases}
{\partial \mathcal{P}(\hat\theta,\theta) \over \partial\hat\theta}\lvert_{\hat\theta=\theta}=0 \hspace{0.5 in}&\text{(IC$_1$)}\\
{\partial^2 \mathcal{P}(\hat\theta,\theta) \over \partial\hat\theta^2}\lvert_{\hat\theta=\theta}\le0 \hspace{0.5 in}&\text{(IC$_2$)}.\\
 \end{cases}
\end{eqnarray} 
 Substituting for $\mathcal{P}(\hat\theta,\theta)$,  these conditions can be simplified to
\begin{eqnarray}
\forall \theta\in \Theta,
  \begin{cases}
{dt(\theta)\over d\theta}={c\over\theta}{d\gamma(\theta)\over d\theta} \hspace{0.5 in}&\text{(IC$_1$)}\\
{d\gamma(\theta)\over d\theta}\ge0 \hspace{0.5 in}&\text{(IC$_2$)}.\\
 \end{cases}
 \end{eqnarray}
 This means that first both $\gamma(\theta)$ and $t(\theta)$ must be increasing in type $\theta$ (by IC$_2$) and secondly, IC$_1$ tells us how the increase in transfers w.r.t to the agent types are related to increase in deliverable SNR. Let $\rho(\theta)$ denotes the utility of the relay agent of type $\theta$ with the optimal truthful contract, i.e, a mechanism where relay chooses contract designed for his type only. Then, $\rho(\theta)$ can simply be given by $\mathcal{P}(\theta,\theta)$, i.e.,
\begin{equation}
\rho(\theta)=t(\theta)-{c\gamma(\theta)\over\theta}.
\label{rho}
\end{equation}
Using IC$_1$, we can compute that
\begin{equation}
{d\rho\over d\theta}={dt(\theta)\over d\theta}-{c\over\theta}{d\gamma(\theta)\over d\theta}+{c\gamma(\theta)\over\theta^2}={c\gamma(\theta)\over\theta^2}
\label{drho}
\end{equation}
which is positive and implies that $\rho(\theta)$ is an increasing function of $\theta$ and hence the higher types benefit with higher returns. Assuming relay agent's {\em reservation utility} to be 0, its {\em individual rationality} (IR) condition can therefore simply be given by:
\begin{equation}
\rho(\underline\theta)=0   \hspace{0.3 in}\text{(IR).}
\label{irc}
\end{equation}
This is because making transfers is costly to the source node and since higher type relay nodes have higher returns, source just has to give zero utility to the lowest type $\underline\theta$ to satisfy the IR condition. Using equations (\ref{rho}), (\ref{drho}), and (\ref{irc}), we can hence write
\begin{equation}
t(\theta)={c\gamma(\theta)\over\theta}+\int_{\underline\theta}^{\theta}{c\gamma(\tau)\over\tau^2}d\tau.
\label{tranc}
\end{equation}
Now the source's objective is to maximize the expected utility which is given as:

\begin{equation}
\begin{split}
\int_{\underline\theta}^{\bar\theta}\int_{\underline\theta}^{\bar\theta}\cdots\int_{\underline\theta}^{\bar\theta}\sum_{n=1}^{N}(U(\gamma(\theta_n))-t(\theta_n))f(\theta_1,\theta_2,\cdots,\theta_N)\\d\theta_1d\theta_2\cdots d\theta_N.
\label{sobjective}
\end{split}
\end{equation}

\begin{proposition}
\label{prop:contsobjective}
We can rewrite source's optimization problem as follows: 
\begin{dmath}
\label{ctobjective}
\max_{\gamma(\theta)}\sum_{n=1}^{N}\int_{\underline\theta}^{\bar\theta}\left(U(\gamma(\theta))-{c\gamma(\theta)\over\theta}-{c\gamma(\theta)\over\theta^2}{1-F_n(\theta)\over f_n(\theta)}\right)f_n(\theta)d\theta\nonumber\\
\text{subject to IC$_2$ or     } {d\gamma(\theta)\over d\theta}\ge0 \;\; \text{(i.e. $\gamma$ is increasing)}\nonumber\\
\text{and $\gamma\ge0$,  (SNR must be positive)}
\end{dmath}
where $f_n(\theta)$ is the marginal probability distribution and $F_n(\theta)$ is corresponding cumulative distribution of types in the $n$th subcarrier. 
\end{proposition}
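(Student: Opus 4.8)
The plan is to start from the expected-utility objective (\ref{sobjective}) and massage it into (\ref{ctobjective}) in three moves: marginalize the joint distribution, substitute the transfer rule (\ref{tranc}) that incentive compatibility and individual rationality impose, and then apply Fubini's theorem to the iterated integral that substitution produces. Throughout, I would keep track of which constraints are ``used up'' so that the final feasible set is exactly $\{d\gamma/d\theta\ge0,\ \gamma\ge0\}$.

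First I would observe that the $n$th summand $U(\gamma(\theta_n))-t(\theta_n)$ in (\ref{sobjective}) depends on the single coordinate $\theta_n$ only. Integrating out the remaining $N-1$ coordinates of $f(\theta_1,\ldots,\theta_N)$ therefore collapses the $N$-fold integral into $\sum_{n=1}^{N}\int_{\underline\theta}^{\bar\theta}\bigl(U(\gamma(\theta))-t(\theta)\bigr)f_n(\theta)\,d\theta$, where $f_n$ is the marginal of $\theta_n$. This is the only place the joint law enters, which is why only the marginals $f_n$ and $F_n$ survive in the statement.

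Next I would substitute $t(\theta)$ using the transfer rule (\ref{tranc}), which was obtained earlier by integrating (\ref{drho}) subject to the binding individual-rationality condition (\ref{irc}). The term $c\gamma(\theta)/\theta$ passes through unchanged; the nontrivial piece is $\int_{\underline\theta}^{\bar\theta}\Bigl(\int_{\underline\theta}^{\theta}\frac{c\gamma(\tau)}{\tau^2}\,d\tau\Bigr)f_n(\theta)\,d\theta$. Changing the order of integration over the triangle $\{\underline\theta\le\tau\le\theta\le\bar\theta\}$ turns this into $\int_{\underline\theta}^{\bar\theta}\frac{c\gamma(\tau)}{\tau^2}\Bigl(\int_{\tau}^{\bar\theta}f_n(\theta)\,d\theta\Bigr)d\tau=\int_{\underline\theta}^{\bar\theta}\frac{c\gamma(\tau)}{\tau^2}\bigl(1-F_n(\tau)\bigr)\,d\tau$. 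Relabelling $\tau$ as $\theta$ and multiplying and dividing by $f_n(\theta)$ writes this as $\int_{\underline\theta}^{\bar\theta}\frac{c\gamma(\theta)}{\theta^2}\frac{1-F_n(\theta)}{f_n(\theta)}f_n(\theta)\,d\theta$, which is precisely the ``virtual cost'' term in (\ref{ctobjective}). Collecting the three pieces under one integral against $f_n$ and summing over $n$ gives the claimed objective functional.

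Finally I would note the constraint reduction: (IC$_1$) and (IR) have been folded into the formula for $t$, so the only remaining incentive requirement is the monotonicity condition (IC$_2$), i.e. $d\gamma/d\theta\ge0$, together with the physical non-negativity $\gamma\ge0$; conversely, given any nondecreasing nonnegative $\gamma$, defining $t$ by (\ref{tranc}) yields a contract satisfying both (IC) and (IR), so (\ref{ctobjective}) is an equivalent reformulation rather than a mere relaxation. The main obstacle I anticipate is the interchange-of-integration step: getting the swapped limits right on the triangular domain and correctly recognizing $\int_{\tau}^{\bar\theta}f_n$ as $1-F_n(\tau)$; this, and checking that Fubini and the earlier differentiation steps are legitimate (which holds under the maintained regularity that $\gamma$ is piecewise $C^1$ and the densities are well behaved), are the places where care is needed, while the rest is bookkeeping.
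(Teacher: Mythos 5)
Your proposal is correct and follows essentially the same route as the paper's proof: marginalize the joint density to collapse the $N$-fold integral, substitute the transfer rule (\ref{tranc}), and convert the double-integral term into $\int_{\underline\theta}^{\bar\theta}\frac{c\gamma(\theta)}{\theta^2}\bigl(1-F_n(\theta)\bigr)\,d\theta$. The only (immaterial) difference is that you swap the order of integration over the triangle via Fubini where the paper integrates by parts against $F_n'(\theta)$; both yield the identical expression, and your added remark on the equivalence of the constraint sets is consistent with the paper's earlier discussion.
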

\begin{IEEEproof}
Proof is provided in Appendix.
\end{IEEEproof}
The optimization problem in (\ref{ctobjective}) can be interpreted as follows: the source has to maximize the expression in the brackets of the objective function subject to the constraint that $\gamma (\theta)$ is positive and increasing in $\theta$, where the first two terms of the objective are same as in the source's optimization problem under complete information scenario and the last term measures the impact of incentive problem. In order to solve the optimization problem in (\ref{ctobjective}), first we can just try to do pointwise maximization of the objective function at each $\theta$. However, if pointwise maximization at each $\theta$ in (\ref{ctobjective}) does not give us an increasing $\gamma(\theta)$ function, then we can use resort to optimal control theory. 

Using this continuous case as a reference, we will now focus our analysis to the case where the types are considered to be discrete values rather than taken from a continuous set. This case is of more practical interest because the contracts can be easily broadcasted as finite number of values. In the next subsection, we will design incentive compatible contracts by approximating the continuous distribution by a discrete distribution with finite points. With the discussion above, we have a good idea about the structure of the optimal contract and we will notice some parallels when we discuss the discrete agent types case in the following subsection. 

\subsubsection{Solution with discrete relay-agent types}
We will now see how to solve a more practical problem of designing contracts for relay agents with discrete types. This problem is more practical because the number of contracts are finite and can be transmitted to relay-agents in real-time. We quantize the set of types $\Theta=[\underline{\theta},\bar\theta]$ with a quantization factor $K$ such that the collection of types are reduced to a discrete set of $K$ types, i.e., $\Theta=\{\delta_1,\delta_2\cdots\delta_K\}$. Without loss of generality, we can assume that $\delta_1<\delta_2<\cdots<\delta_K$. We consider the quantization process to be uniform with equidistant values, i.e., $\delta_k=\underline{\theta}+\frac{k-1}{K}(\bar\theta-\underline{\theta})$, and if $\Theta$ is unbounded above, then $\bar\theta$ can be replaced by the upper limit of a desired confidence level. We chose quantization to be uniform mainly because of its ease of implementation and a closer representation of continuous distribution, however, in general a non-uniform quantization process can also be chosen depending upon how sensitive the cost function is to the variation in types. Using forward difference method, the probability that a relay-agent could be of type $\delta_k$ in $n$th subcarrier is given by $\pi_{kn}=P(\delta_k\le\theta_n<\delta_{k+1})=F_n(\delta_{k+1})-F_n(\delta_{k})$ ($\delta_{K+1}$ can be replaced by $\bar\theta$) with $\sum_{k=1}^{K}\pi_{kn}=1$. We assume that source is aware of this distribution on all subcarriers.

The objective of the source is to maximize its expected utility by designing an incentive compatible and individually rational optimal contract $(\gamma(\delta_k),t(\delta_k))$ (for simplicity, we will now refer it as $(\gamma_k,t_k)$) for each $\delta_k\in\Theta$, i.e.,
\begin{equation}
\begin{split}
\max_{\gamma(\theta),t(\theta)}\sum_{n=1}^{N}\mathbf{E}_n[U(\gamma(\theta))-t(\theta)]=\\\max_{\gamma_k,t_k \forall k}\sum_{n=1}^{N}\sum_{k=1}^{K}\pi_{kn}(U(\gamma_k)-t_k).
\label{objectorig}
\end{split}
\end{equation}
The {\em individual rationality} condition for this discrete scenario can now be given by:
\begin{equation}
t_k-{c\gamma_k\over\delta_k}\ge0,\;\;\;\forall\delta_k\in\Theta
\label{discreteIR}
\end{equation}
and the {\em incentive compatibility} condition is given by:
\begin{equation}
t_k-{c\gamma_k\over\delta_k} \ge t_j-{c\gamma_j\over\delta_k},\;\;\;\forall \delta_k,\delta_j\in\Theta.
\label{discreteIC}
\end{equation}

\begin{theorem}
\label{th1}
For the optimal solution, the individual rationality condition for the lowest type is binding, i.e., $t_1-{c\gamma_1\over\delta_1}=0$ and the others can be ignored.
\end{theorem}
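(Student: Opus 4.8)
The plan is to prove the statement in two steps: first, that the individual-rationality (IR) constraint of the lowest type \emph{implies} IR for every higher type, so that all but the lowest IR constraint in~(\ref{discreteIR}) are redundant; second, that at any optimal contract this lowest-type constraint must hold with equality.

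For the first step I would re-establish, directly from the discrete incentive-compatibility constraints~(\ref{discreteIC}), the monotonicity of the truthful payoff $\rho_k := t_k - c\gamma_k/\delta_k$ that the continuous analysis already exhibited in~(\ref{drho}). Fix indices $j<k$, so $\delta_j<\delta_k$. The IC constraint preventing type $\delta_k$ from claiming to be type $\delta_j$ reads $t_k - c\gamma_k/\delta_k \ge t_j - c\gamma_j/\delta_k$. Because $\gamma_j\ge 0$ and $0<\delta_j<\delta_k$, we have $c\gamma_j/\delta_k \le c\gamma_j/\delta_j$, hence $t_j - c\gamma_j/\delta_k \ge t_j - c\gamma_j/\delta_j = \rho_j$. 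Chaining the two inequalities gives $\rho_k \ge \rho_j$, so $\rho_1 \le \rho_2 \le \cdots \le \rho_K$ — the single-crossing (Spence--Mirrlees) structure of $\mathcal{C}$ forces the agents' surpluses to be ordered by type. Consequently, if the lowest-type IR constraint $\rho_1 \ge 0$ holds, then $\rho_k \ge \rho_1 \ge 0$ for every $k$, so every constraint in~(\ref{discreteIR}) is automatically satisfied; this is the discrete counterpart of taking $\rho(\underline\theta)=0$ in~(\ref{irc}).

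For the second step I would argue by perturbation and contradiction. Suppose $(\gamma_k,t_k)_{k=1}^{K}$ is optimal for~(\ref{objectorig}) but $\rho_1 = t_1 - c\gamma_1/\delta_1 > 0$. Consider the modified contract with the same SNR targets $\gamma_k$ and transfers $t_k' = t_k - \varepsilon$ for a small $\varepsilon>0$. Since~(\ref{discreteIC}) only involves the differences $t_k - t_j$, all IC constraints are preserved. By the monotonicity just shown, $\rho_k' = \rho_k - \varepsilon \ge \rho_1 - \varepsilon$, which is nonnegative once $\varepsilon \le \rho_1$, so IR~(\ref{discreteIR}) still holds. Finally, because $\sum_{k=1}^{K}\pi_{kn} = 1$ on every subcarrier $n$, the objective increases by exactly $\varepsilon N > 0$, strictly improving on the supposed optimum — a contradiction. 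Hence $\rho_1 = 0$ at optimality.

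The routine content is just the two inequality manipulations; the only points needing care are making sure the feasibility constraint $\gamma_k\ge 0$ is genuinely in force (it is, being part of the feasible set in~(\ref{objectorig})) so that $c\gamma_j/\delta_k \le c\gamma_j/\delta_j$ is valid, and checking that the uniform downward shift of transfers cannot violate any constraint other than possibly IR. Neither is a real obstacle; the substance of the theorem is simply that single crossing orders the information rents by type, so only the bottom type's participation constraint can ever bind.
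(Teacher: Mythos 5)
Your proof is correct and follows essentially the same route as the paper: use the IC constraint against type $\delta_1$ (or, in your slightly more general form, against any lower type $\delta_j$) together with $\gamma\ge 0$ and $\delta_1<\delta_k$ to chain $t_k-c\gamma_k/\delta_k\ge t_1-c\gamma_1/\delta_k\ge t_1-c\gamma_1/\delta_1$, so the lowest type's IR implies all the others, and then observe that a uniform downward shift of all transfers preserves every IC and IR constraint while strictly increasing the source's objective, forcing the lowest IR to bind at the optimum. Your write-up merely makes the paper's perturbation step more explicit (quantifying the shift by $\varepsilon\le\rho_1$ and the objective gain by $\varepsilon N$); there is no substantive difference.
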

\begin{IEEEproof} 
For any $\delta_k\in\Theta$ using the IC condition from equation (\ref{discreteIC}) we can write that
\begin{equation}
t_k-{c\gamma_k\over\delta_k} \ge t_1-{c\gamma_1\over\delta_k} \ge t_1-{c\gamma_1\over\delta_1}
\end{equation}
since $\delta_k>\delta_1\ge0$ and $c\gamma\ge0$. Therefore, if IR for $\delta_1$ is inactive, so will be IR for $\delta_k$. Hence, all the other IRs except for $\delta_1$, can be ignored. Now, if IR for $\delta_1$ is not binding, then all transfers $t_k$'s can be reduced by the same amount, having no effect on IC and hence increasing source's utility.
\end{IEEEproof}

\begin{theorem}
\label{th2}
For the optimal solution, $0\le\gamma_1\le \gamma_2\le\cdots\le \gamma_K$, and all the downward adjacent ICs are binding and others can be ignored, i.e.,
\begin{equation}
t_k-{c\gamma_k\over\delta_k}=t_{k-1}-{c\gamma_{k-1}\over\delta_k}, \;\;\;\forall k\ge2.
\label{adjacent}
\end{equation}
\end{theorem}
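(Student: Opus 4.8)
The plan is to prove the two assertions separately, invoking Theorem~\ref{th1} only to fix the lowest type's transfer. I would first derive $0\le\gamma_1\le\gamma_2\le\cdots\le\gamma_K$ from incentive compatibility alone, so that it holds for every feasible contract and in particular for the optimum. For adjacent types $\delta_{k-1}<\delta_k$, write the two instances of (\ref{discreteIC}) in which $\delta_k$ does not mimic $\delta_{k-1}$ and $\delta_{k-1}$ does not mimic $\delta_k$; adding them, the transfer terms cancel and one is left with $c\gamma_k\left(\tfrac1{\delta_{k-1}}-\tfrac1{\delta_k}\right)\ge c\gamma_{k-1}\left(\tfrac1{\delta_{k-1}}-\tfrac1{\delta_k}\right)$, and since $\delta_{k-1}<\delta_k$ makes the common factor strictly positive this yields $\gamma_k\ge\gamma_{k-1}$. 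This is the discrete counterpart of IC$_2$ and is exactly where single crossing enters; $\gamma_1\ge0$ is just the SNR-nonnegativity requirement.

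Next I would show that at an optimum each constraint in (\ref{adjacent}) is tight. Holding the (monotone) profile $\gamma=(\gamma_1,\dots,\gamma_K)$ fixed, define a transfer schedule by the recursion that forces (\ref{adjacent}) to bind: $\hat t_1=c\gamma_1/\delta_1$ (IR tight at $\delta_1$, as in Theorem~\ref{th1}) and $\hat t_k=\hat t_{k-1}+\tfrac{c}{\delta_k}\bigl(\gamma_k-\gamma_{k-1}\bigr)$ for $k\ge2$. I would then verify: (i) $(\gamma,\hat t)$ remains individually rational, since a short computation gives $\hat\rho_k-\hat\rho_{k-1}=c\gamma_{k-1}\left(\tfrac1{\delta_{k-1}}-\tfrac1{\delta_k}\right)\ge0$ and hence $\hat\rho_k\ge\hat\rho_1=0$ for all $k$; and (ii) $\hat t_k\le t_k$ for all $k$, by induction from $\hat t_1=c\gamma_1/\delta_1\le t_1$ (the original (\ref{discreteIR}) at $k=1$) using the original downward adjacent IC $t_k\ge t_{k-1}+\tfrac{c}{\delta_k}(\gamma_k-\gamma_{k-1})$ at the inductive step. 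Since transfers enter the objective (\ref{objectorig}) only through $-\sum_n\sum_k\pi_{kn}t_k$, replacing $t$ by $\hat t$ cannot decrease the source's payoff, and strictly increases it if any constraint in (\ref{adjacent}) was slack, contradicting optimality. Hence at the optimum (\ref{adjacent}) is tight, and the source's problem reduces to maximizing over monotone $\gamma\ge0$ with $t$ determined by (\ref{adjacent}) together with $t_1=c\gamma_1/\delta_1$.

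Finally I would confirm that monotonicity of $\gamma$ together with the tight constraints (\ref{adjacent}) already implies all of (\ref{discreteIC}), so the remaining ICs may indeed be dropped. Writing $\rho_k=t_k-c\gamma_k/\delta_k$, (\ref{adjacent}) gives $\rho_k-\rho_{k-1}=c\gamma_{k-1}\left(\tfrac1{\delta_{k-1}}-\tfrac1{\delta_k}\right)$. For $j<k$, the constraint that $\delta_k$ does not mimic $\delta_j$ reads $\rho_k-\rho_j\ge c\gamma_j\left(\tfrac1{\delta_j}-\tfrac1{\delta_k}\right)$; telescoping the left side this is equivalent to $\sum_{i=j+1}^{k}c\left(\tfrac1{\delta_{i-1}}-\tfrac1{\delta_i}\right)(\gamma_{i-1}-\gamma_j)\ge0$, true term by term because $\gamma_{i-1}\ge\gamma_j$ and $\delta_{i-1}<\delta_i$. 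The symmetric computation handles $j>k$, the summand $(\gamma_j-\gamma_{i-1})$ again being nonnegative by monotonicity. Combining the three parts gives the theorem.

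The delicate point is the reformulation step: naively lowering the transfer of a single slack constraint could break an IC comparing a ``lowered'' type against a ``non-lowered'' one, so the argument must instead rebuild the \emph{entire} schedule from the binding recursion and then rely on the local-to-global step of the previous paragraph to re-certify feasibility, together with the monotone bookkeeping $\hat t_k\le t_k$. By contrast, the monotonicity claim is a one-line consequence of single crossing.
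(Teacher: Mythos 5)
Your proof is correct and follows essentially the same route as the paper's: monotonicity of $\gamma$ from adding the mutual ICs of a pair of types, a telescoping argument showing that binding adjacent downward ICs together with monotone $\gamma$ imply all the ICs, and an optimality argument that lowers transfers to bind any slack adjacent IC. The only difference is in packaging the necessity step: where the paper iteratively reduces blocks of transfers $t_j$, $j\ge k$, until each adjacent IC binds, you construct the fully-bound schedule $\hat t$ in one shot from the recursion and verify $\hat t_k\le t_k$ by induction --- a slightly cleaner presentation of the same idea.
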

\begin{IEEEproof}
Proof is provided in Appendix.
\end{IEEEproof}

Now using theorems \ref{th1} and \ref{th2}, the optimization problem in (\ref{objectorig}), (\ref{discreteIR}), and (\ref{discreteIC}) can be reduced to:
\begin{eqnarray}
\max_{\gamma_k\forall k}\sum_{n=1}^{N}\sum_{k=1}^{K}\pi_{kn}(U(\gamma_k)-t_k)\nonumber\\
\text{s.t. }t_1={c\gamma_1\over\delta_1}, 
t_k={c\gamma_1\over\delta_1}+\sum_{i=2}^{k}{c(\gamma_i-\gamma_{i-1})\over\delta_i}\nonumber\\
\label{optimiz}
\text{and }0\le\gamma_1\le \gamma_2\le\cdots\le \gamma_K. 
\end{eqnarray}

\begin{proposition}
\label{prop:adverse1}
 The optimization problem in (\ref{optimiz}) can be rewritten as follows:
\begin{align}
\max_{\gamma_k\forall k}\sum_{n=1}^{N}\sum_{k=1}^{K}\pi_{kn} g_n(\gamma_k)
\;\;\text{s.t.}\;\;0\le\gamma_1\le \gamma_2\le\cdots\le \gamma_K,
\end{align}
where
\begin{dmath}
g_n(\gamma_k)=
 \begin{cases}
 U(\gamma_k)-{c\gamma_k\over\delta_k}-c\gamma_k\left(\frac{1}{\delta_k}-\frac{1}{\delta_{k+1}}\right)\left({1-\sum_{i=1}^{k}\pi_{in}\over\pi_{kn}}\right),\;\forall{k<K}\\
 U(\gamma_k)-{c\gamma_k\over\delta_K} \text{    if }k=K.
 \end{cases}\nonumber\label{reducedoptim}\\
\end{dmath}
\end{proposition}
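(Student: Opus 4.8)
The plan is to substitute the explicit transfers given by the equality constraints in (\ref{optimiz}) into the objective of (\ref{optimiz}) and then re-sum the resulting double sum so that it is organized entirely around the decision variables $\gamma_k$. The term $\sum_{n=1}^{N}\sum_{k=1}^{K}\pi_{kn}U(\gamma_k)$ is common to both formulations, and the monotonicity/positivity constraints $0\le\gamma_1\le\cdots\le\gamma_K$ carry over verbatim, so the entire content of the proposition reduces to verifying, for each subcarrier $n$ separately, the identity $\sum_{k=1}^{K}\pi_{kn}t_k=\sum_{k=1}^{K}\pi_{kn}\frac{c\gamma_k}{\delta_k}+\sum_{k=1}^{K-1}c\gamma_k\bigl(\tfrac{1}{\delta_k}-\tfrac{1}{\delta_{k+1}}\bigr)\bigl(1-\sum_{i=1}^{k}\pi_{in}\bigr)$; note that multiplying the bracketed expression defining $g_n(\gamma_k)$ by $\pi_{kn}$ exactly cancels the denominator $\pi_{kn}$, so this is the only thing left to check.

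First I would write $t_k=\frac{c\gamma_1}{\delta_1}+\sum_{i=2}^{k}\frac{c(\gamma_i-\gamma_{i-1})}{\delta_i}$ and plug it into $\sum_{k=1}^{K}\pi_{kn}t_k$. The key manipulation is interchanging the order of the two finite sums: swapping the inner sum over $i$ with the outer sum over $k$ produces the tail weights $\sum_{k=i}^{K}\pi_{kn}=1-\sum_{k=1}^{i-1}\pi_{kn}$, where I use $\sum_{k=1}^{K}\pi_{kn}=1$. I would then expand the telescoping differences $\gamma_i-\gamma_{i-1}$, re-index one of the two resulting sums by $j=i-1$, and collect the coefficient of each $\gamma_k$. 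To close the identity I would absorb the stray $\frac{c\gamma_1}{\delta_1}$ term into the general sum using $\sum_{k=1}^{0}\pi_{kn}=0$, drop the vanishing $k=K$ tail using $\sum_{k=1}^{K}\pi_{kn}=1$, and then apply the elementary telescoping identity $\pi_{kn}+\bigl(1-\sum_{i=1}^{k}\pi_{in}\bigr)=1-\sum_{i=1}^{k-1}\pi_{in}$ to merge the two $1/\delta_k$ contributions into the single clean expression that matches the right-hand side above.

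The computation is entirely mechanical and uses no property of $U$, $c$, or the ordering of the $\delta_k$ beyond what is already in place; the only place where genuine care is needed is keeping the summation limits and the re-indexing straight once the telescoping terms $\gamma_i$ and $\gamma_{i-1}$ are separated into distinct sums — that bookkeeping is essentially the whole (and only) obstacle. I would also note in passing that the degenerate case $\pi_{kn}=0$ causes no difficulty, since $\pi_{kn}g_n(\gamma_k)$ is naturally interpreted as the product $\pi_{kn}U(\gamma_k)-\pi_{kn}\frac{c\gamma_k}{\delta_k}-c\gamma_k\bigl(\tfrac{1}{\delta_k}-\tfrac{1}{\delta_{k+1}}\bigr)\bigl(1-\sum_{i=1}^{k}\pi_{in}\bigr)$, which is precisely the quantity that appears in the rearranged sum.
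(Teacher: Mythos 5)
Your proposal is correct and follows essentially the same route as the paper's proof: substitute the recursive expressions for $t_k$ into $\sum_{n}\sum_{k}\pi_{kn}t_k$, interchange the order of summation (the paper does this by writing out each $\pi_{kn}t_k$ term and collecting coefficients of each $\gamma_k$), and simplify the tail sums via $\sum_{i=1}^{K}\pi_{in}=1$ to arrive at the stated form of $g_n$. Your explicit handling of the degenerate case $\pi_{kn}=0$ is a small bookkeeping refinement the paper omits, but it does not alter the argument.
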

\begin{proof}
Proof is provided in Appendix.
\end{proof}
Now because of the concavity assumption on $U(\cdot)$, this problem can easily be transformed into a simple convex optimization problem and can be solved using standard methods. In fact, it can be shown that the inequality constraint of increasing $\gamma$'s can also be neglected, and hence point-wise maximization of each $\sum_{n=1}^{N} g_n(\gamma_k)$ is sufficient here. Once we calculate contract pairs $(\gamma_k,t_k)$ $(k\le K)$, it can be easily verified that a relay that has a type $\theta$ in a certain subcarrier will automatically select contract $k$, if $\delta_k\le\theta<\delta_{k+1}$, because this contract will maximize its utility.

The objective of the contract design so far was to design incentive compatible and individually rational offers that the interested relay-agents can accept without revealing their types or channel information directly to the source. This helps the source to segregate the relays in terms of their abilities to deliver certain SNRs at the destination and the price at which they are willing to do so. Hence, the source can select relays based on his budget constraint and we discuss this relay selection problem and its solution in next section in detail.

\section{Relay Selection under a Budget constraint}
\label{relayselection}
In this section, we will discuss the relay selection procedure for the source under the budget constraint with either complete or incomplete information. While in perfect information scenario, the source is aware of the contracts acceptable by each relay in each subcarrier, in case of imperfect information, source broadcasts the contract menu to all relays and each relay responds with a desired contract pair for each subcarrier as discussed in section \ref{sbcase} (if a certain relay is unwilling to relay in a certain subcarrier, we assume it accepts null contract (0, 0)). In either case, the source knows a contract pair $(\gamma_{mn}, t_{mn})$ that is acceptable by relay $m$ in the $n$th subcarrier. Under a budget constraint $\mathcal{T}$, the objective of the source now becomes to maximize its total capacity. Let $\mathcal{M}=\{1,2,...,M\}$ denote the set of all the relay agents who are willing to relay while providing a certain SNR at destination in each subcarrier for a certain price that is determined by the contract for that relay agent. The objective is to obtain a subset vector $\mathbf{S}=\{\mathcal{S}_{n},\forall n\le N\,|\mathcal{S}_{n}\subseteq\mathcal{M}\}$ (i.e. set of selected relays in each subcarrier),
\begin{eqnarray}
\max_{\mathbf{S}}\mathcal{C}(\mathbf{S})\nonumber\\
\text{s.t. }\sum_{n=1}^{N}\sum_{\forall m\in\mathcal{S}_n}t_{mn}\le\mathcal{T}\nonumber\\
\text{where,   } \mathcal{C}(\mathbf{S})=\sum_{n=1}^{N}\log_2\left(1+\sum_{\forall m\in\mathcal{S}_{n}}\gamma_{mn}\right).
\label{optimprob}
\end{eqnarray}
This problem is a nonlinear non-separable convex knapsack problem in its current form with $\mathcal{T}$ as knapsack size, $t_{mn}$ as weights, $\gamma_{mn}$ as values of items, and $\mathcal{C}(\mathbf{S})$ as the objective function. Because of the non-linearity and non-separability of the objective function, it is difficult to find the exact solution in the existing form of this problem \cite{brett}. The standard method to obtain the optimal solution is to use branch and bound algorithm, where at each step, a series of continuous subproblems are solved to obtain upper bounds by integer relaxation of the original problem. The branch and bound algorithm is discussed in detail in \cite{hillier}. The authors in \cite{sharkey,romeijn} discuss methods to obtain these upper bounds for a general class of nonlinear non-separable knapsack problems. However, branch and bound method can still have the worst case complexity of exhaustive search and for $M$ number of subcarriers and $N$ number of potential relays, the worst case complexity could be as high as $O(2^{MN})$, which makes the branch and bound method practically infeasible due to exponential complexity. In fact, we verified the complexity to be exponential in most cases for a very simple system with simulations. Due to lack of space and practical importance, we will not go into further detail of obtaining the exact solution. Instead, in the next subsection, we will propose a heuristic solution based on the structure of our original problem described in (\ref{optimprob}). 

\subsection{Heuristic Solution}
\label{heuristicsol}
Here, we will discuss a few heuristics to solve the original problem by breaking it down in smaller problems that are standard 0-1 knapsack problems. We notice that if we could divide the overall constraint $\mathcal{T}$ by allocating a budget constraint $\mathcal{T}_n$ for subcarrier $n$, such that $\sum_{n=1}^{N}\mathcal{T}_n=\mathcal{T}$, then the sub-problem for relay selection in subcarrier $n$ is just a standard knapsack problem, with $\mathcal{T}_n$ as knapsack size, $t_{mn}$ as weights, and $\gamma_{mn}$ as values of items. This problem can just be written as follows:
\begin{eqnarray}
\max_{\mathcal{S}_n} \sum_{\forall m\in\mathcal{S}_{n}}\gamma_{mn} \;\;\text{ s.t. } \sum_{\forall m\in\mathcal{S}_n}t_{mn}\le\mathcal{T}_n.
\end{eqnarray}
We also notice that the objective of the problem is to maximize the product $\Pi_{i=1}^{n}(1+\sum_{\forall m\in\mathcal{S}_{n}}\gamma_{mn})$, subject to the budget constraints. In order to maximize this product we need to maximize SNR $\sum_{\forall m\in\mathcal{S}_{n}}\gamma_{mn}$ for each subcarrier, while also making sure that none of the SNR's are too low, otherwise they will minimize the product. Based on this analogy, we will attempt to solve this problem using two forms of heuristics and we combine the results to obtain the final solution.

\begin{algorithm*}[ht-y]
\caption{Relay Selection Algorithm}\label{rsdp}
\begin{algorithmic}[1]
\For{$i=1$ to $4$}
\State Set $\mathcal{S}_n^{(i)}=\phi, \forall n\le N$
\If {$i\le3$}
Obtain weights $w_n^{(i)}$ by (\ref{ee}), (\ref{aesw}) or (\ref{nesw}) for $i=1,2,3$ respectively $\forall n\le N$.
\State Set $\mathcal{T}_n=w_n^{(i)}\mathcal{T}/\sum_{n=1}^{N}w_n^{(i)}, \forall n\le N$
\For{$n=1$ to $N$}
\State Set $\tau=\mathcal{T}_n$
\For{$t=0$ to $\mathcal{T}_n$}
 \State$\Gamma[0,t]=0$
\EndFor
\For{$m=1$ to $M$}
  \For{$t=0$ to $\mathcal{T}_n$}
    \If {$t_{mn}\le t$ and $\Gamma[m-1,t-t_{mn}]+\gamma_{mn} > \Gamma[m-1,t]$} 
      \State $\Gamma[m,t] =\Gamma[m-1,t-t_{mn}]+\gamma_{mn}$,
       $s[m,t]=1$
    \Else 
    \State $\Gamma[m,t] = \Gamma[m-1,t]$, $s[m,t]=0$
    \EndIf
  \EndFor
\EndFor
\For{$m=M$ to $1$}
  \If {$s[m,\tau] = 1$}
    $\mathcal{S}_n^{(i)}:=\mathcal{S}_n^{(i)} \cup \{m\}$,
    $\tau=\tau-t_{mn}$
  \EndIf
\EndFor
\EndFor
\ElsIf{$i=4$}
\While{$\mathcal{T} \ge 0$}
\For{$n=1$ to $N$} 
\State $m=\arg\max_{\forall m\notin \mathcal{S}_n^{(i)}}\{\gamma_{mn}/t_{mn}\}$, 
 $\mathcal{T}=\mathcal{T}-t_{mn}$
\If {$\mathcal{T}\ge 0$}
$\mathcal{S}_n^{(i)}:=\mathcal{S}_n^{(i)} \cup \{m\}$
\EndIf
\EndFor
\EndWhile
\EndIf
\State Set $\mathbf{S}^{(i)}=\{\mathcal{S}_{1}^{(i)},\mathcal{S}_{2}^{(i)}\cdots \mathcal{S}_{N}^{(i)}\}$
\EndFor
\State \Return  $\mathbf{S}$ as given by (\ref{optimalset})
\end{algorithmic}
\end{algorithm*}

The first heuristic is to decompose $\mathcal{T}$ into $\mathcal{T}_n$ by using certain weights profile $w_n$ for subcarrier $n$ and hence, $\mathcal{T}_n$ could be given by $\mathcal{T}_n=\frac{w_n\mathcal{T}}{\sum_{n=1}^{N}w_n}$. The second heuristic will be to sequentially select relays for each subcarrier as long as we are under the budget constraint. We suggest three weight profiles for the first heuristic as follows:
\begin{enumerate}
\item {\em Equal subcarrier weights (ESW)}: The simplest possible way is to choose equal weights for all subcarriers: 
\begin{equation}
w_n^{(1)}=1,\;\;\;\forall n.
\label{ee}
\end{equation}
\end{enumerate}
A simple way to measure two different contracts relative to each other is to compare how much is the SNR per unit price for each contract. This metric, i.e., SNR per unit price, can be defined as the efficiency and can be used to compare the subcarriers relative to each other either by averaging the efficiencies of contracts in each subcarrier or by calculating net efficiency in each subcarrier. The following weight profiles are based on this observation:
\begin{enumerate}
\setcounter{enumi}{1}
\item {\em Average subcarrier efficiency weights (ASW)}: If the efficiency of a contract accepted by relay $m$ in subcarrier $n$ is defined by $e_{mn}=\frac{\gamma_{mn}}{t_{mn}}$, then the weights $w_n$ of $n$th subcarrier are obtained by the following relation: 
\begin{equation}
w_n^{(2)}=\frac{\sum_{m=1}^{M}{e_{mn}}}{M},\;\;\;\forall n.
\label{aesw}
\end{equation}
\item {\em Net subcarrier efficiency weights (NSW)}: In this case, weights are obtained by calculating the net efficiency of a subcarrier, i.e., the ratio of maximum achievable SNR and the corresponding maximum transfer:
\begin{equation}
w_n^{(3)}=\frac{\sum_{m=1}^{M}{\gamma_{mn}}}{\sum_{m=1}^{M}{t_{mn}}},\;\;\;\forall n.
\label{nesw}
\end{equation}
\end{enumerate}
Using each of the three above mentioned weight profiles, we solve the original problem iteratively by solving $N$ standard 0-1 type knapsack problems. These knapsack problems can be solved using dynamic programming by scaling and rounding transfers $t_{mn}$ and budget $\mathcal{T}_n$ \cite{korte}. For each of the above mentioned weight profiles, we obtain solution subset vectors $\mathbf{S}^{(1)}$, $\mathbf{S}^{(2)}$  and $\mathbf{S}^{(3)}$ for each subcarrier. For the second heuristic solution, we perform an efficiency based relay selection as follows:
\begin{enumerate}
\setcounter{enumi}{3} 
\item {\em Sequential Subcarrier Contract Pair Allocation (SSCPA)}: In this heuristic, we allocate a relay to each subcarrier sequentially by choosing a relay that provides greatest efficiency $e_{mn}$ and has not been allocated in that subcarrier. We repeat this process until we run out of $\mathcal{T}$ and we obtain the corresponding solution subset vector and call it $\mathbf{S}^{(4)}$.
\end{enumerate}
The overall heuristic solution is hence chosen as the solution vector that gives highest capacity, i.e.,
\begin{equation}
\mathbf{S}=\arg\max_{\mathbf{S}^{(i)}|i=1,2,3,4}\{\mathcal{C}(\mathbf{S}^{(i)})\}.
\label{optimalset}
\end{equation}

Table \ref{tb:heuristics} summarizes the overall heuristic solution and Algorithm \ref{rsdp} describes the entire relay selection heuristic. The net complexity of the algorithm can be calculated as follows: The SSCPA heuristic is just sorting and selecting $MN$ contract pairs, hence the complexity can be $O(MN\log(MN))$. The three other heuristics based on weight profiles solve $N$ knapsack problems, each upper bounded by pseudo polynomial complexity of $O(M\mathcal{T})$ (assuming $\mathcal{T}$ is rounded). Hence the overall complexity will be $O(MN\mathcal{T})$, which is much easier to handle than branch and bound algorithm for small to medium values of budget constraint.

\begin{table*}[ht]
\begin{center}
\caption{List of Heuristics}
\label {tb:heuristics}
\begin{tabular}{c|p{6cm}|c}
\hline
Heuristic Method & Description & Solution Subset Vector\\
\hline
ESW & Divide $\mathcal{T}$ based on weight profiles $w_n^{(1)}=1,\;\forall n$ and solve $N$ 0-1 knapsack problems & $\mathbf{S}^{(1)}$\\
\hline
ASW & Divide $\mathcal{T}$ based on weight profiles $w_n^{(2)}=\frac{\sum_{m=1}^{M}{e_{mn}}}{M},\;\forall n$ and solve $N$ 0-1 knapsack problems & $\mathbf{S}^{(2)}$\\
\hline
NSW & Divide $\mathcal{T}$ based on weight profiles $w_n^{(3)}=\frac{\sum_{m=1}^{M}{\gamma_{mn}}}{\sum_{m=1}^{M}{t_{mn}}},\;\forall n$ and solve $N$ 0-1 knapsack problems & $\mathbf{S}^{(3)}$\\
\hline
SSCPA & Sequential Allocation per subcarrier as per maximum efficiency& $\mathbf{S}^{(4)}$\\
\hline
\hline
Overall Heuristic & Combination of ESW, ASW, NSW and SSCPA  & $\mathbf{S}=\arg\max_{\mathbf{S}^{(i)}|i=1,2,3,4}\{\mathcal{C}(\mathbf{S}^{(i)})\}.$\\
\hline
\end{tabular}
\end{center}
\end{table*} 

%\pagebreak
\section{Numerical results}
\label{simulation}
In this section, we will briefly present some of the numerical results. For simulations, we consider a system where the relay types $\theta$ are normalized, independent and uniformly distributed between 50 to 300 in each subcarrier. We consider uniform distribution for our simulations because uniform distribution is the maximum entropy probability distribution for any random variable contained in the distribution's support. It effectively means that source has no additional information about the types other than their support and is therefore a benchmark scenario. In order to generate a set of first and second best contracts, we quantize the range of types with a quantization factor $K$ to be 10. The number of subcarriers $N$ is chosen to be 16 and the parameter $c$ is taken to be 1. The simulation parameters and the corresponding first and second-best contracts are presented in Table \ref{tb:fbsb}. The first-best contracts are calculated when the source is completely aware of the relay-agent's discrete type (complete information). However, the first-best contracts are not incentive compatible (downward ICs do not hold) and it can be easily verified by plugging the parameters of Table \ref{tb:fbsb} in relay-agent's overall utility. On the other hand, the second-best contracts are incentive compatible by virtue of design. The IC conditions can be verified with the corresponding second-best contracts, i.e., any relay of type $\theta$ s.t. $\delta_k\le\theta<\delta_{k+1}$ will automatically pick the $k$th contract, because this contract will maximize its expected utility. Another noticeable difference between the first and second-best contract that can be seen from Table \ref{tb:fbsb} is that under incomplete information, except for the lowest type, the source pays more to get a certain SNR than what it would have gotten for lesser price under complete information. Moreover, in case of incomplete information, the source asks for sub-efficient SNRs from all the relay types except from the highest type. This is in order to provide incentive for higher types to not to choose a lower types' contract, and the related concept is called {\em information rent} \cite{bolton}. In simple words, information rent is the positive surplus that the relay receives and Table \ref{tb:fbsb} clearly indicates that higher type relay gets more positive surplus for the contract designed for its type.

\begin{table*}[ht]
\begin{center}
\caption{First and Second Best Contracts}
\label {tb:fbsb}
\begin{tabular}{c|c|c|c}
\hline
Relay types & First-best contract & Second-best contract & Information Rent\\
\& distribution &   $(\gamma^{(1)},t^{(1)})$ ($\gamma$ in dB) & $(\gamma^{(2)},t^{(2)})$ ($\gamma$ in dB) & $t^{(2)}-{c\gamma^{(2)}}/{\theta}$\\
\hline
$\delta_1=50, \pi_{1n}=0.1$& (15.4490,0.7013) & (9.0401,0.1603) & 0\\
$\delta_2=75, \pi_{2n}=0.1$& (17.2510,0.7080) &(12.3131,0.2806) &0.0534\\
$\delta_3=100,\pi_{3n}=0.1$& (18.5208,0.7113) &(14.6324,0.4008) &0.1102\\
$\delta_4=125,\pi_{4n}=0.1$& (19.5021,0.7133) &(16.4428,0.5210) &0.1683\\
$\delta_5=150,\pi_{5n}=0.1$& (20.3020,0.7147) &(17.9322,0.6412) &0.2271\\
$\delta_6=175,\pi_{6n}=0.1$& (20.9773,0.7156) &(19.1990,0.7615) &0.2863\\
$\delta_7=200,\pi_{7n}=0.1$& (21.5615,0.7163) &(20.3020,0.8817) &0.3457\\
$\delta_8=225,\pi_{8n}=0.1$& (22.0764,0.7169) &(21.2794,1.0019) &0.4052\\
$\delta_9=250,\pi_{9n}=0.1$& (22.5367,0.7173) &(22.1564,1.1221) &0.4649\\
$\delta_{10}=275,\pi_{10n}=0.1$& (22.9528,0.7177) &(22.9528,1.2424) &0.5246\\
\hline
\end{tabular}
\end{center}
\end{table*} 

Next, we will evaluate the performance of four heuristics, namely ESW, ASW, NSW, SSCPA, that we suggested in section \ref{heuristicsol} with respect to each other and a few benchmarks that we will describe here. We compare the performance of these heuristics with the obvious simple solution, i.e., select the contracts that offer the best SNRs amongst all contracts for all subcarriers while satisfying the budget constraint. Moreover, we will also compare the performance of the proposed heuristics with respect to the solution of original problem in (\ref{optimprob}) with relaxed integer constraints as a benchmark. We will call these solutions as {\bf ``Best SNR contracts"} and {\bf``Relaxed Solution"} respectively in the corresponding plots. In Figures \ref{h1}, \ref{h2}, and \ref{h3}, we plot average capacity per subcarrier vs number of relay agents for three values of $\mathcal{T}$, i.e., 8, 16 and 24 respectively. The number of subcarriers $N$ are fixed at 16 and quantization factor $K$ is chosen to be 10. As we could notice from these plots, the heuristic SSCPA always performs better when there are fewer relay agents or when the budget constraint is large. The three other heuristics based on weight profiles, namely ESW, ASW and NSW have a very similar behavior and perform better than SSCPA when number of relay agents are high and budget is not too big. The intuitive reasoning behind this observation is 
as follows: As the number of relay agents increase, there are more diversified contracts available per subcarrier to choose from. The ESW, ASW and NSW schemes by their inherent design try to maximize the product $\Pi_{i=1}^{n}(1+\sum_{\forall m\in\mathcal{S}_{n}}\gamma_{mn})$ by splitting the budget in each subcarrier and hence maximizing the sum SNRs in every subcarrier. Under low to medium budget conditions and with high number of relay agents, these algorithms outperform SSCPA because the latter scheme could run out of budget before it could select contracts in each subcarrier. Moreover under such conditions, SSCPA performs poorly as the number of agents increase because sequential allocation may result in first choosing contracts that may need higher transfers and hence source may run out of budget too quickly without balancing sum SNRs well. This behavior can be seen numerically in Figures \ref{h1} and \ref{h2}. Under high budget conditions, SSCPA scheme has more freedom to sequentially choose best and optimum contracts per subcarrier as long as the budget allows inherently improving the product $\Pi_{i=1}^{n}(1+\sum_{\forall m\in\mathcal{S}_{n}}\gamma_{mn})$ while automatically balancing the SNRs per subcarrier. Fig. \ref{h3} demonstrates this adequately. Moreover, ``Best SNR Contracts" solution not only has inferior performance compared to proposed heuristics in general, but the average capacity with this solution decreases as the number of relay agents increase. This is because the ``Best SNR Contracts" solution just selects the contracts that offer the best SNRs amongst all subcarriers without actually balancing the SNRs amongst all subcarriers reducing the overall capacity. In addition to this, the gap between the envelope of proposed heuristics (overall heuristic solution) and ``Relaxed Solution" decreases as $\mathcal{T}$ is increased. This gap reduces with budget because with higher budget more contracts can be chosen as whole per subcarrier hence reducing the difference in capacity obtained with the ``Relaxed Solution". Notice that the optimal solution lies in between this gap, hence, smaller this gap is, better is the performance. Additionally, we notice that for the overall proposed heuristic, capacity tends to converge to a stable value as number of relay agents are increased. The convergence happens because of the diversification of independent relay types.

%\begin{figure}
%\centering
%\includegraphics[width=3.7in]{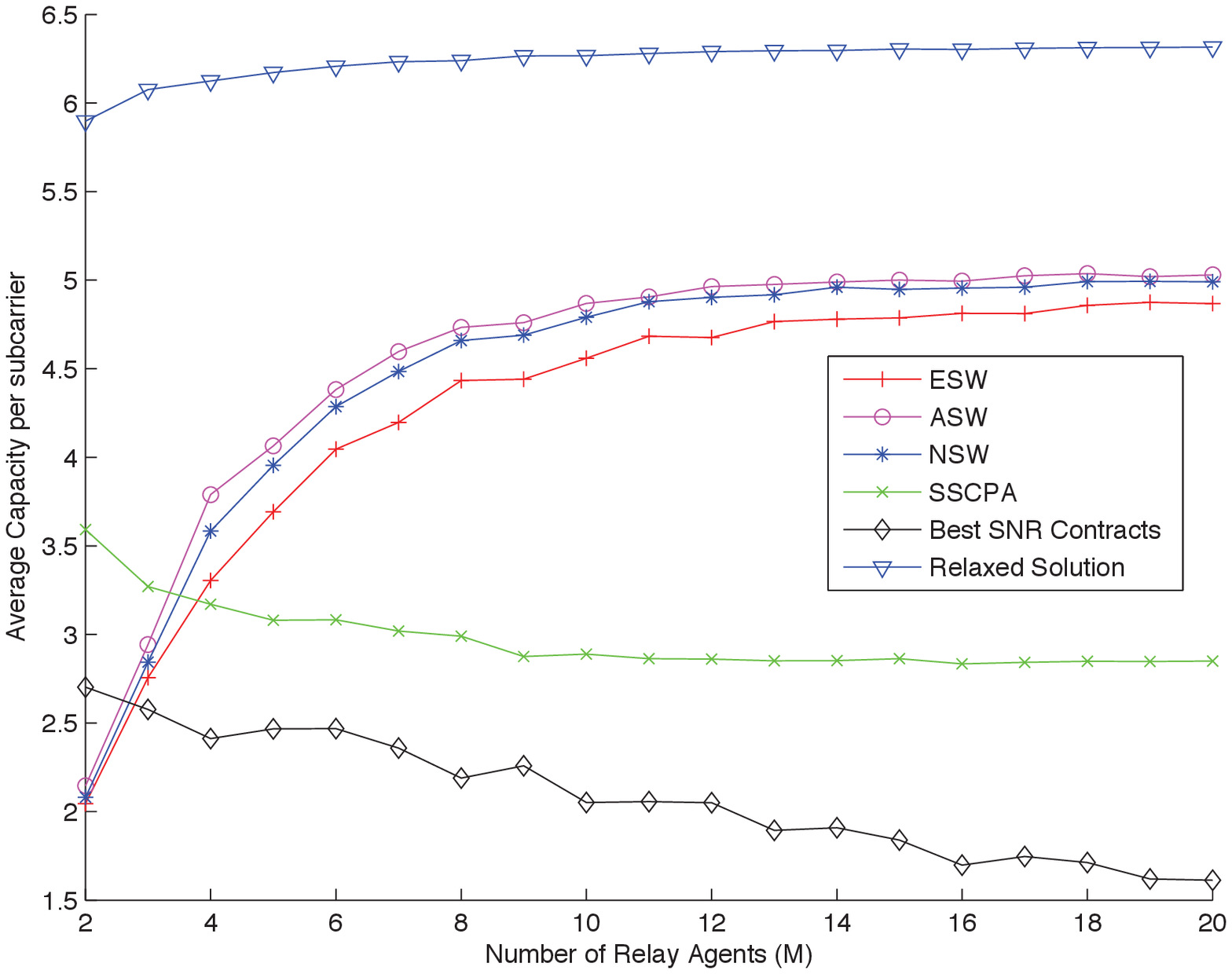}
%\caption{Comparison of different heuristics with $\mathcal{T}=8$, $N=16$, and $K=10$}
%\label{h1}
%\end{figure}
%
%\begin{figure}
%\centering
%\includegraphics[width=3.7in]{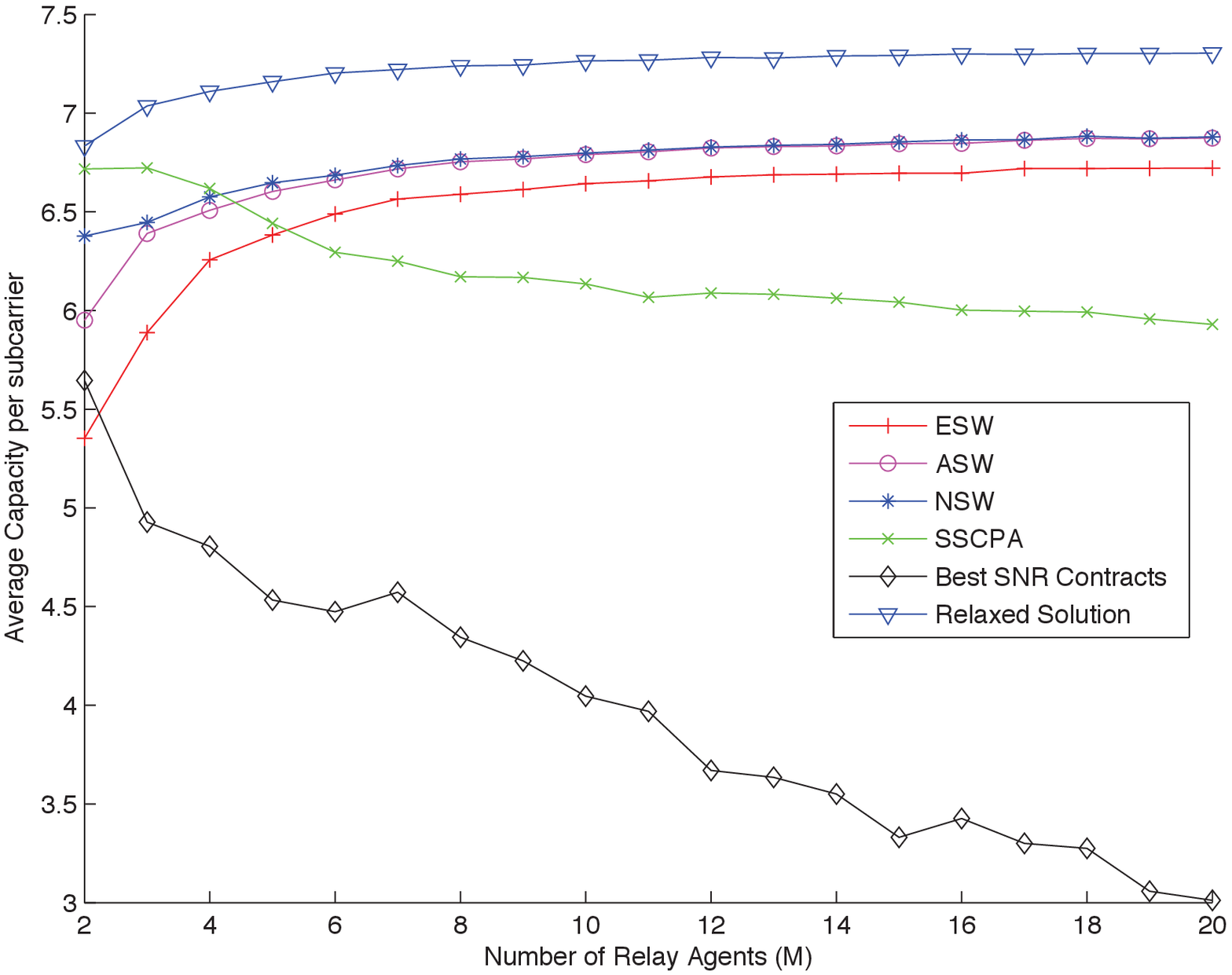}
%\caption{Comparison of different heuristics with $\mathcal{T}=16$, $N=16$, and $K=10$}
%\label{h2}
%\end{figure}
%
%\begin{figure}
%\centering
%\includegraphics[width=3.7in]{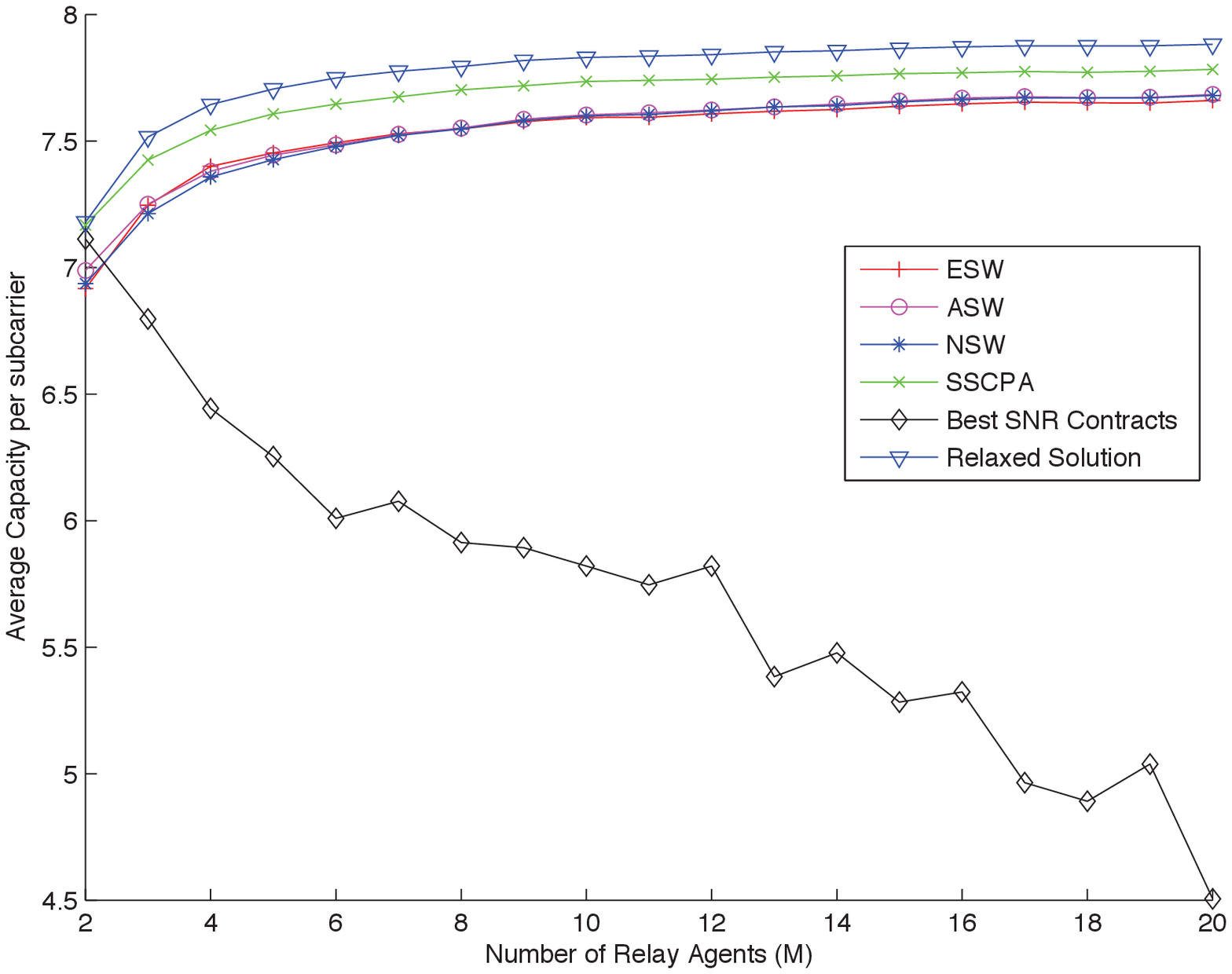}
%\caption{Comparison of different heuristics with $\mathcal{T}=24$, $N=16$, and $K=10$}
%\label{h3}
%\end{figure}

\begin{figure*}
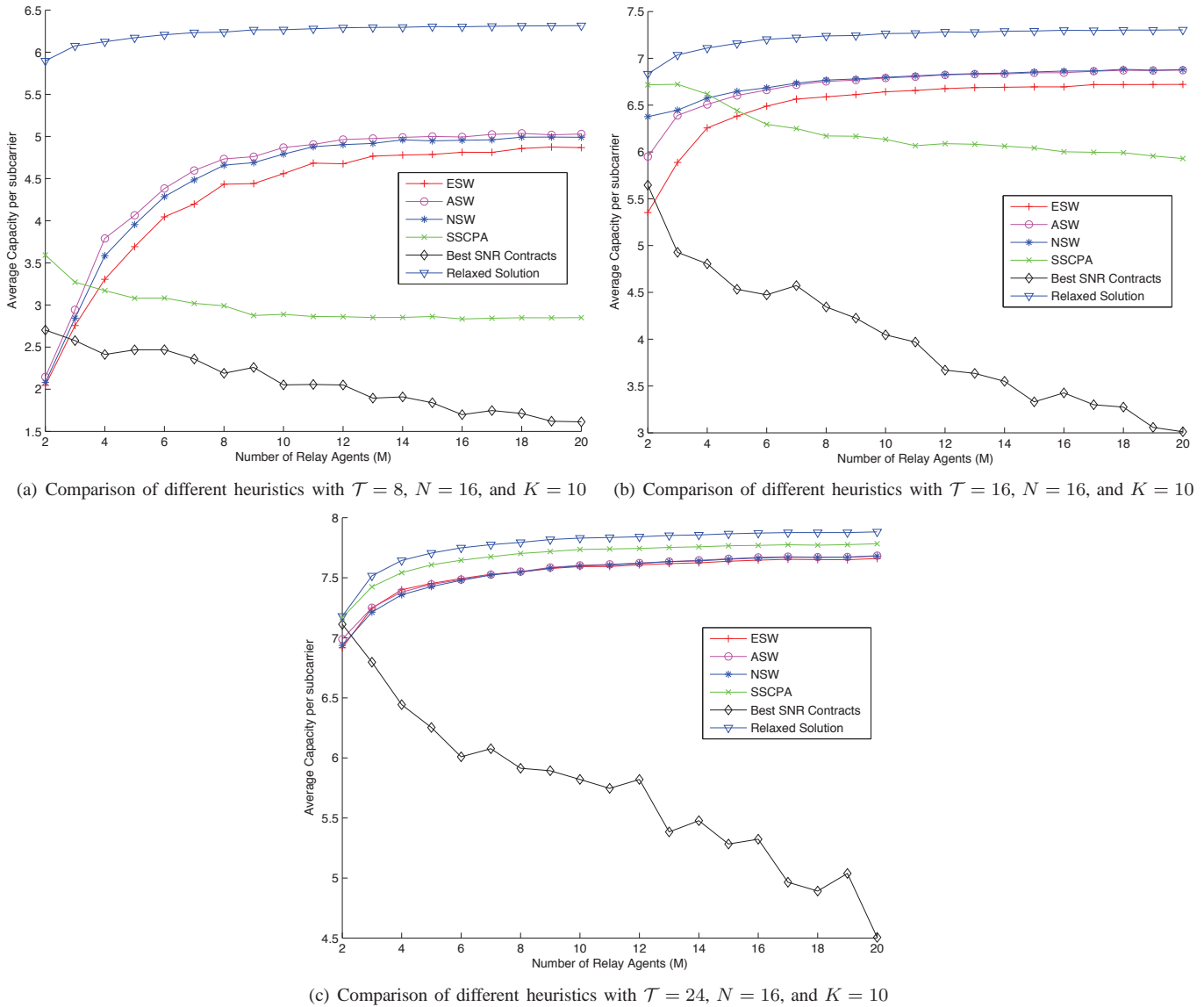

\centering{
\subfigure[Comparison of different heuristics with $\mathcal{T}=8$, $N=16$, and $K=10$]{\includegraphics[width=3.5in]{images/h1}
\label{h1}}
\subfigure[Comparison of different heuristics with $\mathcal{T}=16$, $N=16$, and $K=10$]{\includegraphics[width=3.5in]{images/h2}
\label{h2}}
\subfigure[Comparison of different heuristics with $\mathcal{T}=24$, $N=16$, and $K=10$]{\includegraphics[width=3.5in]{images/h3}
\label{h3}}
}
\caption{Comparison of heuristic schemes}
\end{figure*}

Fig. \ref{numsub} compares the performance of the proposed heuristic, i.e., overall average capacity with respect to number of subcarriers for two values of budget constraints. The simulation parameters are provided under the figure. We could easily deduce from this figure that the performance of ``Best SNR Contracts" saturates very quickly and is far inferior from the proposed heuristic solution because of lower overall average capacity. One reason why ``Best SNR Contracts" has inferior performance is that in this solution the best SNRs may not be well-spread over all subcarriers and some of the subcarriers may be underused.

%\begin{figure}
%\centering
%\includegraphics[width=3.6in]{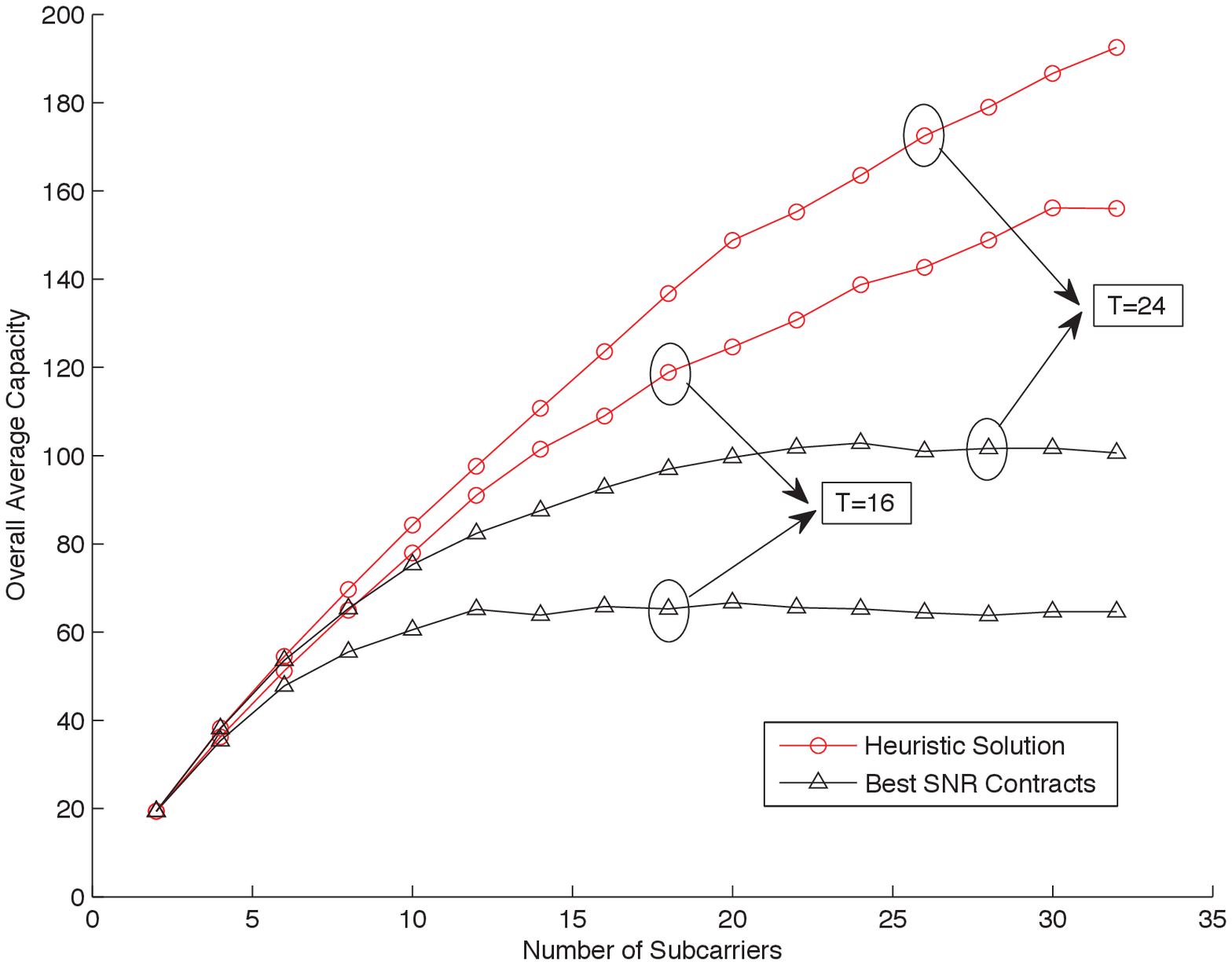}
%\caption{Capacity vs. number of subcarriers (N) for fixed $M=10$, and $K=10$ and two values of $\mathcal{T}=16,24$}
%\label{numsub}
%\end{figure}
%
%\begin{figure}
%\centering
%\includegraphics[width=3.6in]{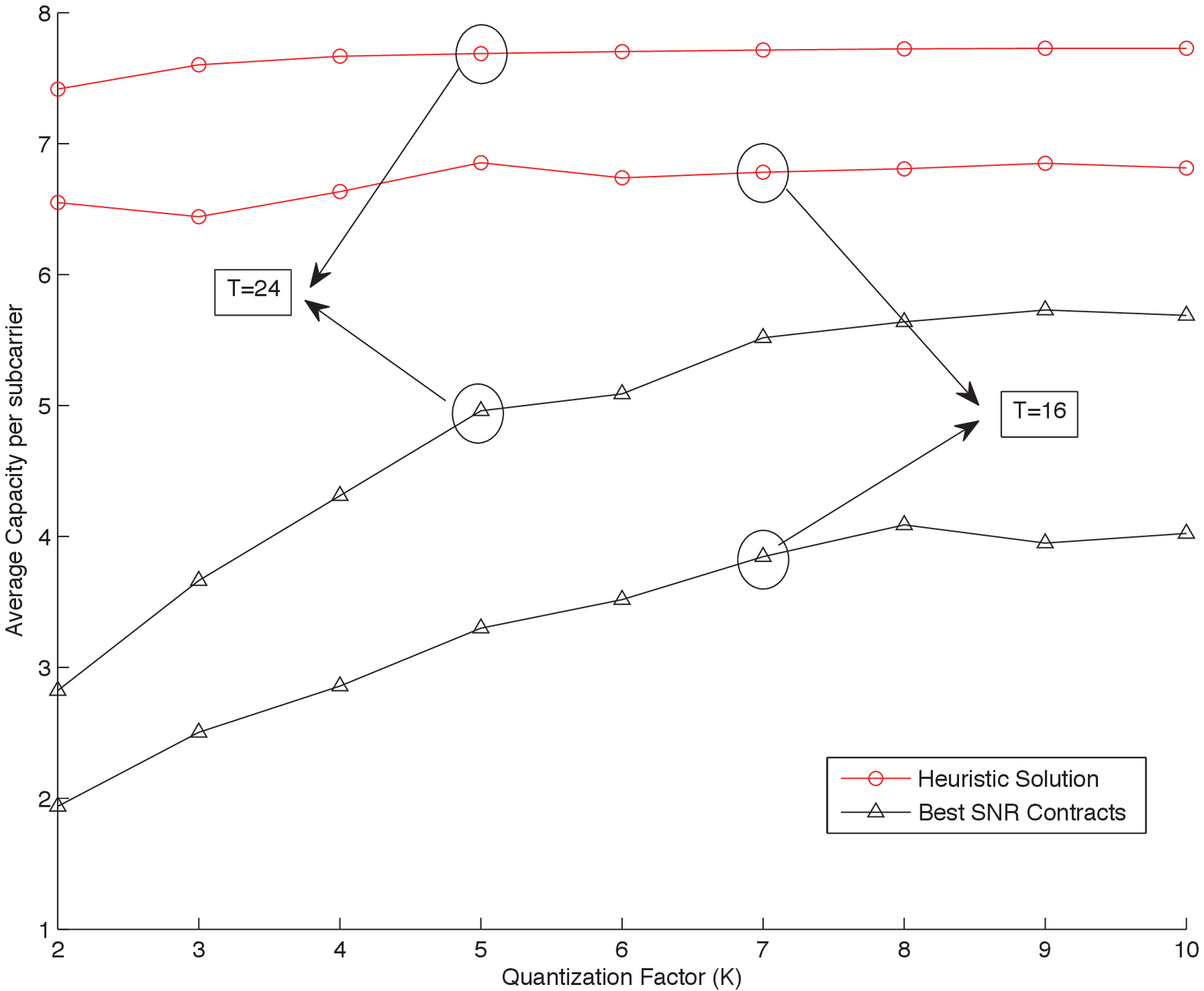}
%\caption{Average Capacity vs. Quantization Factor (K) for fixed $N=16$, $M=10$ and two values of $\mathcal{T}=16,24$}
%\label{quant}
%\end{figure}
%
%\begin{figure}
%\centering
%\includegraphics[width=3.6in]{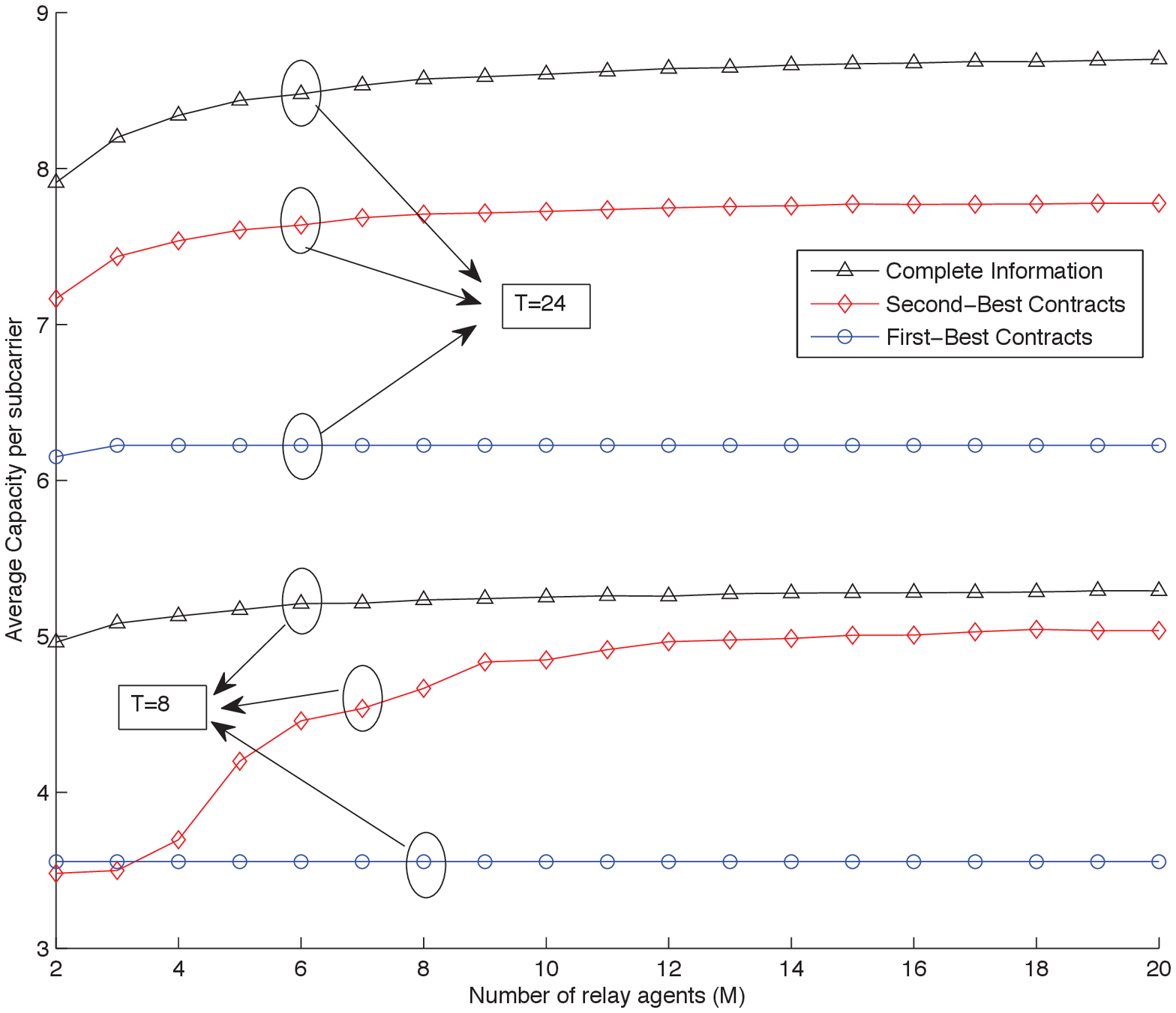}
%\caption{Comparison of first best and second-best contracts, and full information scenario with $N=16$, $K=10$, and two values of $\mathcal{T}=8,24$}
%\label{fbsb}
%\end{figure}

\begin{figure*}
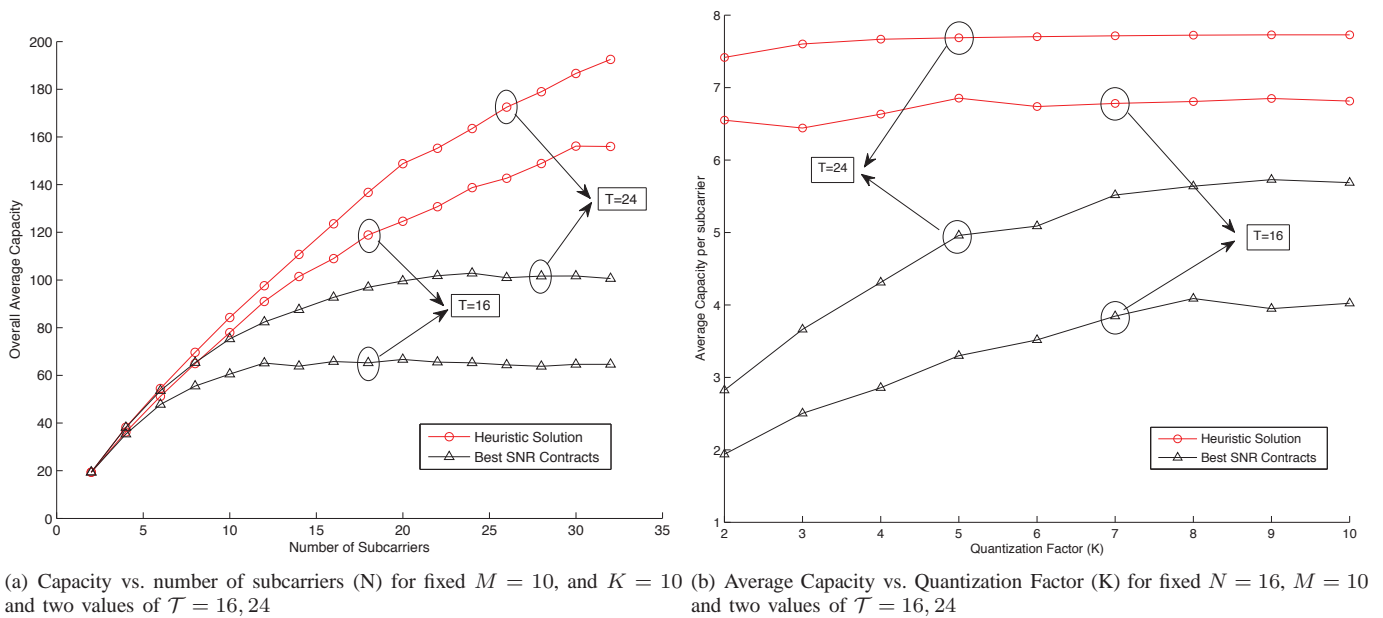
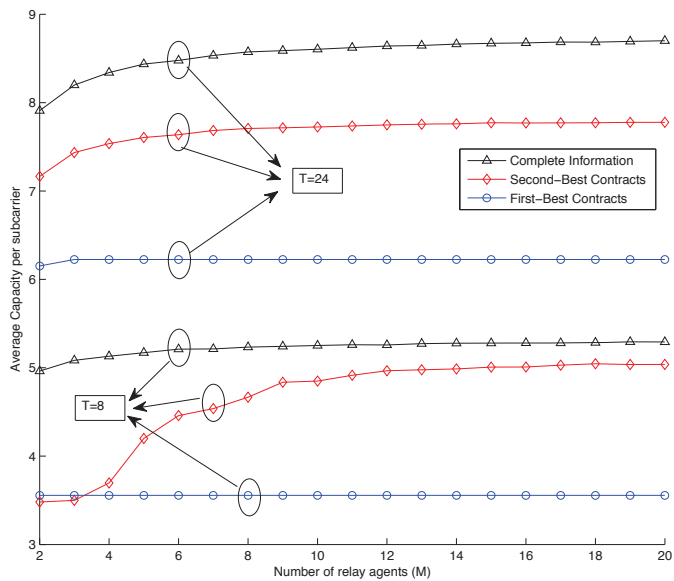

\centering{
\subfigure[Capacity vs. number of subcarriers (N) for fixed $M=10$, and $K=10$ and two values of $\mathcal{T}=16,24$]{\includegraphics[width=3.5in]{images/numsub}
\label{numsub}}
\subfigure[Average Capacity vs. Quantization Factor (K) for fixed $N=16$, $M=10$ and two values of $\mathcal{T}=16,24$]{\includegraphics[width=3.5in]{images/quant}
\label{quant}}
\subfigure[Comparison of first best and second-best contracts, and full information scenario with $N=16$, $K=10$, and two values of $\mathcal{T}=8,24$]{\includegraphics[width=3.5in]{images/fbsb}
\label{fbsb}}
}
\caption{Performance evaluation}
\end{figure*}

In Fig. \ref{quant}, we plot the average capacity per subcarrier vs. quantization factor for the heuristic solution and for the ``Best SNR Contracts" solution with two different budget constraints. Once again, the parameters are provided underneath the graph. It is interesting to observe that the average capacity per subcarrier for the proposed heuristic solution remains almost the same as we increase the quantization factor $K$, which means that it is not that advantageous to quantize the probability distribution to a very high factor in order to obtain better performance. For example, a quantization factor as low as 3 which essentially classifies the types as ``good", ``average", or ``bad" can be sufficient. This further demonstrates numerically that the source needs to design and broadcast very few contracts to get fair performance which leads to less signalling overheads. However, this observation is valid for uniform distribution and results could vary for a non-uniform distribution where higher quantization factor may lead to some types being more probable than others.

Lastly, we analyze the performance of our system under both complete and incomplete information scenarios. In order to do that, we plot average capacity per subcarrier vs. the number of relay agents for two values of budget constraints under these two scenarios in Fig. \ref{fbsb}. We use heuristic solution for relay selection to evaluate the capacity. For each budget constraint, we plot the case where we have complete information, i.e., source knows the efficient contracts for all relays, and for incomplete information we plot two cases; when first-best contracts are broadcasted, and when second-best contracts are broadcasted. As expected, on average, for each budget constraint, the performance gap due to information asymmetry between the complete information and the incomplete information with the second-best contracts is smaller than gap between complete information and the incomplete information with the first-best contracts. Moreover, the capacity obtained with the first-best contracts is almost always constant because the first-best contracts are not designed to be incentive compatible and only the smallest contract is selected by all relays, so increasing the number of agents has no effect on performance.

\section{Conclusion}
\label{concl}
In this paper, we study the problem of relay selection and incentive mechanisms in multi-carrier wireless systems under asymmetric information.  In such networks, a source node is assumed to be ill-informed of possible potential relay nodes and their private information such as channel conditions on the relay-destination link. This information asymmetry makes it harder for the source to choose selfish relay nodes efficiently and hence to optimize its throughput. In this paper, we address this classical problem by introducing a simple {\em principal-agent model} for source and relays. The advantage of this model is that it leaves the bargaining power completely to the principal which in our case is the source node, and this reduces signalling and computational overheads for the relays. We then divide the problem into two parts. In the first part, we use contract theory to design a common incentive compatible contracts for the relays (assuming that source has a joint distribution of private information or {\em types} of relays for all subcarriers), i.e., a relay of a certain type will choose contract designed for his type only. These contracts are broadcasted by the source to all relays and the interested relay-agents respond with the contracts they are willing to accept in each subcarrier. Once the source becomes aware of these contracts, the only problem that remains is to select appropriate relays in each subcarrier such that the overall capacity of source is maximized while the source is under a budget constraint. We formulate this problem of relay selection as a nonlinear non-separable knapsack problem and suggest a heuristic solution to solve it efficiently. The source then notifies the selected relays with instructions such as space-time codes and makes the required transfers. We have compared the performance of heuristic solution with a simple relay selection mechanism and have presented numerical results to show that our solution performs better under the most common settings. The overall mechanism introduced in this paper is simple and has limited interaction between the source and the potential relays and participation of interested nodes is also voluntary. As part of future work, we will address more complex issues for the relay selection mechanism such as including the effect of direct links and considering selective relaying in the proposed system model (selective relaying is discussed in \cite{hasanicc}), addressing energy efficiency with asymmetric information (energy efficiency for cooperative networks is explored \cite{green}), and designing better heuristics and more efficient protocols.

\section*{Appendix}
\subsection{Proof of Proposition \ref{prop:contsobjective}}
\label{proof1}
Lets assume the marginal distribution of type $\theta_n$ to be $f_n(\theta_n)$ and the corresponding cumulative distribution to be $F_n(\theta_n)$ (s.t. $F_n'(\theta)=f_n(\theta)$), where $f_n(\theta_n)$ is given by

\begin{dmath}
f_n(\theta_n)=\\\int_{\underline\theta}^{\bar\theta}\int_{\underline\theta}^{\bar\theta}\cdots\int_{\underline\theta}^{\bar\theta}f(\theta_1,\theta_2,\cdots,\theta_N) d\theta_1d\theta_2\cdots\theta_{n-1}\theta_{n+1}\cdots d\theta_N.
\label{marginial}
\end{dmath}

Using the expression (\ref{marginial}), we can simplify (\ref{sobjective}) and subsequently substitute (\ref{tranc}) as described in equation (\ref{eq:simplifiedintegral}).  We can then use integration by parts to rewrite the last term of (\ref{eq:simplifiedintegral}) to obtain (\ref{eq:byparts}). Substituting (\ref{eq:byparts}) in (\ref{eq:simplifiedintegral}), we can reduce the corresponding optimization problem to (\ref{ctobjective}). Q.E.D.

\begin{figure*}
\begin{eqnarray}
&&\int_{\underline\theta}^{\bar\theta}\int_{\underline\theta}^{\bar\theta}\cdots\int_{\underline\theta}^{\bar\theta}\sum_{n=1}^{N}(U(\gamma(\theta_n))-t(\theta_n))f(\theta_1,\theta_2,\cdots,\theta_N)d\theta_1d\theta_2\cdots d\theta_N\nonumber\\
&=&\sum_{n=1}^{N} \int_{\underline\theta}^{\bar\theta}\int_{\underline\theta}^{\bar\theta}\cdots\int_{\underline\theta}^{\bar\theta}(U(\gamma(\theta_n))-t(\theta_n))f(\theta_1,\theta_2,\cdots,\theta_N)d\theta_1d\theta_2\cdots d\theta_N\nonumber\\
&=&\sum_{n=1}^{N}\int_{\underline\theta}^{\bar\theta}(U(\gamma(\theta_n))-t(\theta_n))f_n(\theta_n)d\theta_n
=\sum_{n=1}^{N}\int_{\underline\theta}^{\bar\theta}(U(\gamma(\theta))-t(\theta))f_n(\theta)d\theta\nonumber\\
&=&\sum_{n=1}^{N}\int_{\underline\theta}^{\bar\theta}\left(U(\gamma(\theta))-{c\gamma(\theta)\over\theta}-\int_{\underline\theta}^{\theta}{c\gamma(\tau)\over\tau^2}d\tau\right)f_n(\theta)d\theta
\label{eq:simplifiedintegral}
\end{eqnarray}
\hrulefill
\begin{eqnarray}
\int_{\underline\theta}^{\bar\theta}\int_{\underline\theta}^{\theta}{c\gamma(\tau)\over\tau^2} f_n(\theta)d\tau d\theta
&=&\int_{\underline\theta}^{\bar\theta}F_n'(\theta)\int_{\underline\theta}^{\theta}{c\gamma(\tau)\over\tau^2} d\tau d\theta
=\left. F_n(\theta)\int_{\underline\theta}^{\theta}{c\gamma(\tau)\over\tau^2} d\tau \right|_{\underline\theta}^{\bar\theta}-
\int_{\underline\theta}^{\bar\theta}{c\gamma(\theta)\over\theta^2} F_n(\theta) d\theta\nonumber\\
&=&\int_{\underline\theta}^{\bar\theta}{c\gamma(\theta)\over\theta^2} (1-F_n(\theta)) d\theta
=\int_{\underline\theta}^{\bar\theta}{c\gamma(\theta)\over\theta^2} \frac{1-F_n(\theta)}{f_n(\theta)}f_n(\theta)d\theta
\label{eq:byparts}
\end{eqnarray}

\hrulefill
\end{figure*}

\subsection{Proof of Theorem \ref{th2}}
\label{proof2}
Let us consider two relay-agent types $\delta_k$ and $\delta_j$ such that $\delta_k>\delta_j$. The mutual IC conditions for these relay types are
\begin{eqnarray}
\label{icu}
t_k-{c\gamma_k\over\delta_k}\ge t_j-{c\gamma_j\over\delta_k},\\
t_j-{c\gamma_j\over\delta_j}\ge t_k-{c\gamma_k\over\delta_j}.
\end{eqnarray}
Adding these two IC conditions gives
\begin{eqnarray}
{c\gamma_k\over\delta_k}+{c\gamma_j\over\delta_j}\le{c\gamma_j\over\delta_k}+{c\gamma_k\over\delta_j}\nonumber\\
\text{or, } \delta_j(\gamma_k-\gamma_j)\le\delta_k(\gamma_k-\gamma_j).
\end{eqnarray}
Since $\delta_k>\delta_j$, we can say that $\gamma_k\ge\gamma_j$. And because $\gamma$'s are all assumed to be positive, we have 
 \begin{equation}
 0\le\gamma_1\le\gamma_2\le\cdots\le \gamma_K.
 \label{snrrelation}
 \end{equation}
So, we proved the first result and now we must also prove that for the optimal solution, ICs reduce to (\ref{adjacent}) when compounded with (\ref{snrrelation}). This proof is a slightly more intricate but we will first prove that when the downward adjacent ICs are binding, i.e., (\ref{adjacent}) is true, all the other ICs are automatically satisfied (sufficiency) and we will then prove that when one or more downward adjacent ICs is not binding, the contract cannot be optimal and a better contract can be obtained by binding the downward adjacent ICs recursively by reducing the transfers, hence increasing the source's utility (necessity).

\subsubsection{Proof of Sufficiency} Suppose the adjacent downward ICs are binding, then we can rewrite (\ref{adjacent}) as follows:
\begin{equation}
(t_k-t_{k-1})={c(\gamma_{k}-\gamma_{k-1})\over\delta_k},\;\;\;\forall k\ge2.
\label{icother}
\end{equation}
Lets consider some $j$ such that $j>k$, then using (\ref{snrrelation}) and (\ref{icother}), we can write:
\begin{eqnarray}
&&(t_j-t_k) \nonumber\\           &=&(t_j-t_{j-1})+(t_{j-1}-t_{j-2})+\cdots+(t_{k+1}-t_k)\nonumber\\
			  &=&{c(\gamma_{j}-\gamma_{j-1})\over\delta_j}+{c(\gamma_{j-1}-\gamma_{j-2})\over\delta_{j-1}}+\cdots{c(\gamma_{k+1}-\gamma_{k})\over\delta_{k+1}}\nonumber\\
			  &\le&{c(\gamma_{j}-\gamma_{j-1})\over\delta_k}+{c(\gamma_{j-1}-\gamma_{j-2})\over\delta_{k}}+\cdots{c(\gamma_{k+1}-\gamma_{k})\over\delta_{k}}\nonumber\\
                        &=&{c(\gamma_j-\gamma_k)\over\delta_k}.
                        \label{prhs}                
\end{eqnarray}
Now, lets take some $j$ such that $j<k$, then using (\ref{snrrelation}) and (\ref{icother}), we can write:
\begin{eqnarray}
&&(t_k-t_j)   \nonumber\\            &=&(t_k-t_{k-1})+(t_{k-1}-t_{k-2})+\cdots+(t_{j+1}-t_j)\nonumber\\
			  &=&{c(\gamma_{k}-\gamma_{k-1})\over\delta_k}+{c(\gamma_{k-1}-\gamma_{k-2})\over\delta_{k-1}}+\cdots{c(\gamma_{j+1}-\gamma_{j})\over\delta_{j+1}}\nonumber\\
			  &\ge&{c(\gamma_{k}-\gamma_{k-1})\over\delta_k}+{c(\gamma_{k-1}-\gamma_{k-2})\over\delta_{k}}+\cdots{c(\gamma_{j+1}-\gamma_{j})\over\delta_{k}}\nonumber\\
                        &=&{c(\gamma_k-\gamma_j)\over\delta_k}.
                        \label{plhs}                
\end{eqnarray}
Combining equations (\ref{prhs}) and (\ref{plhs}), we get:
\begin{eqnarray}
(t_k-t_j) \ge{c(\gamma_k-\gamma_j)\over\delta_k} \text{  or  }t_k-{c\gamma_k\over\delta_k}\ge t_{j}-{c\gamma_{j}\over\delta_k},\;\;\;\forall j\ne k.
\end{eqnarray}
Hence all the ICs are automatically satisfied when adjacent downward ICs are binding.

\subsubsection{Proof of Necessity} Suppose one or more downward adjacent ICs are not binding for optimal solution (and IRs are satisfied because it is a solution). Lets take one such type $\delta_k$ for which adjacent downward IC is inactive, so using IR for type $\delta_{k-1}$ we have:
\begin{equation}
t_k-{c\gamma_k\over\delta_k}> t_{k-1}-{c\gamma_{k-1}\over\delta_k}\ge  t_{k-1}-{c\gamma_{k-1}\over\delta_{k-1}}\ge0\;\;\;\forall j\ge k, j\le k
\end{equation}
What this means is that if we reduce all $t_j$'s, $\forall j\ge k$ with equal amount through which the adjacent downward IC for type $\delta_k$ becomes active, it will not affect any of the IRs and the existing relation (whether binding or not binding) between adjacent downward ICs for every other type. We can iteratively repeat the process (starting with the lowest type for which adjacent downward IC is inactive) till all the adjacent downward ICs are binding. In this process, we have only reduced the transfers to bind all the downward adjacent ICs, which in turn automatically guarantees that all the other ICs are satisfied from the sufficiency conditions proved in Part 1. This therefore increases source's profit because transfers are expensive for the source. Hence, the original contract cannot be optimal because we found a better contract, which is a contradiction. 

\subsection{Proof of Proposition \ref{prop:adverse1}}
\label{proof3}
Lets look at the second term $\sum_{n=1}^{N}\sum_{k=1}^{K}\pi_{kn}t_k$ of the objective of optimization problem in (\ref{optimiz}). Substituting the equality constraints from (\ref{optimiz}), we write the individual terms of the summation as described by (\ref{ct:summationseries}). Now adding the individual equations in (\ref{ct:summationseries}), collecting the terms for each $\gamma_k$, and using the fact that $\sum_{i=1}^{K}\pi_{in}=1$ we can derive (\ref{ct:proofprep2}). Substituting (\ref{ct:proofprep2}) back in (\ref{optimiz}), we get (\ref{reducedoptim}). Q.E.D.

\begin{figure*}[!t]
\begin{eqnarray}
\sum_{n=1}^{N}\pi_{1n}t_1&=&\sum_{n=1}^{N}\pi_{1n}\left({c\gamma_1\over\delta_1}\right)\nonumber\\
\sum_{n=1}^{N}\pi_{2n}t_2&=&\sum_{n=1}^{N}\pi_{2n}\left({c\gamma_1\over\delta_1}+{c\gamma_2\over\delta_2}-{c\gamma_1\over\delta_2}\right)\nonumber\\
\sum_{n=1}^{N}\pi_{3n}t_3&=&\sum_{n=1}^{N}\pi_{3n}\left({c\gamma_1\over\delta_1}+{c\gamma_2\over\delta_2}-{c\gamma_1\over\delta_2}+{c\gamma_3\over\delta_3}-{c\gamma_2\over\delta_3}\nonumber\right)\\
\vdots&&\hspace{1 in}\vdots\nonumber\\
\sum_{n=1}^{N}\pi_{Kn}t_K&=&\sum_{n=1}^{N}\pi_{Kn}\left({c\gamma_1\over\delta_1}+{c\gamma_2\over\delta_2}-{c\gamma_1\over\delta_2}+{c\gamma_3\over\delta_3}-{c\gamma_2\over\delta_3}\cdots+{c\gamma_K\over\delta_K}-{c\gamma_{K-1}\over\delta_K}\right).
\label{ct:summationseries}
\end{eqnarray}
\hrulefill
\begin{eqnarray}
\sum_{n=1}^{N}\sum_{k=1}^{K}\pi_{kn}t_k&=&c\gamma_1 \left({1\over\delta_1}\sum_{n=1}^{N}\sum_{i=1}^{K}\pi_{in}-{1\over\delta_2}\sum_{n=1}^{N}\sum_{i=2}^{K}\pi_{in}\right)+
                                    c\gamma_2 \left({1\over\delta_2}\sum_{n=1}^{N}\sum_{i=2}^{K}\pi_{in}-{1\over\delta_3}\sum_{n=1}^{N}\sum_{i=3}^{K}\pi_{in}\right)\nonumber\\
                                    &&+\cdots c\gamma_k\left({1\over\delta_k}\sum_{n=1}^{N}\sum_{i=k}^{K}\pi_{in}-{1\over\delta_{k+1}}\sum_{n=1}^{N}\sum_{i=k+1}^{K}\pi_{in}\right)+\cdots
                                    +\sum_{n=1}^{N}\pi_{Kn}{c\gamma_K \over\delta_K}\nonumber\\
                                    &=&\sum_{k=1}^{K-1}c\gamma_k \left({1\over\delta_k}\sum_{n=1}^{N}\sum_{i=k}^{K}\pi_{in}-{1\over\delta_{k+1}}\sum_{n=1}^{N}\sum_{i=k+1}^{K}\pi_{in}\right)+\sum_{n=1}^{N}\pi_{Kn}{c\gamma_K \over\delta_K}\nonumber\\
                                     &=&\sum_{n=1}^{N}\sum_{k=1}^{K-1}\pi_{kn}c\gamma_k \left({1\over\delta_k}+\left({1-\sum_{i=1}^{k}\pi_{in}\over\pi_{kn}}\right)\left({1\over\delta_k}-{1\over\delta_{k+1}}\right)\right)+\sum_{n=1}^{N}\pi_{Kn}{c\gamma_K \over\delta_K}
\label{ct:proofprep2}
\end{eqnarray}
\hrulefill
\end{figure*}

\begin{biography}[{\includegraphics[width=1in,height=1.25in,clip,keepaspectratio]{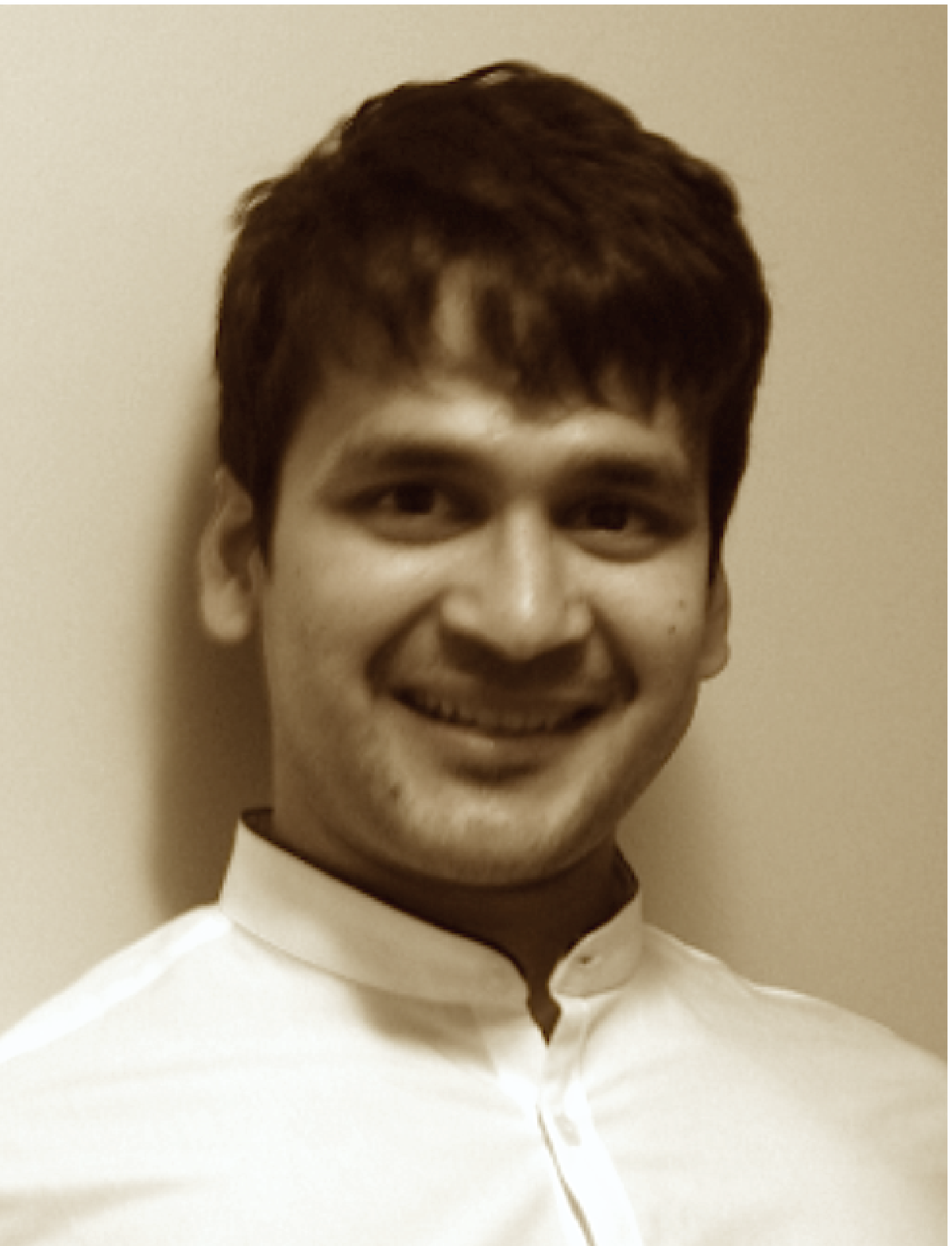}}] {Ziaul Hasan}
received his B.Tech. degree in Electrical Engineering from the Indian Institute of Technology (IIT), Kanpur, India, in 2005 and M.A.Sc and Ph.D. degrees in Electrical and Computer Engineering from University of British Columbia (UBC), Canada in 2008 and 2013 respectively. During his Ph.D. studies, he was awarded Alexander Graham Bell Canada Graduate Scholarship for Doctoral Studies (CGS-D) by Natural Sciences and Engineering Research Council of Canada (NSERC), along with UBC Faculty of Applied Science Graduate Award. In 2011, he also received Michael Smith Foreign Study Supplement Award by NSERC to conduct research at University of Sydney, Australia. Prior to UBC, he has worked as a Senior Associate for Headstrong Corporation, India between 2005 to 2006. In 2009, he worked as a research intern for Broadcom Corporation, Canada. His current research interests are in the areas of resource management for cooperative communications and cognitive radios, heterogenous networks, and economic modelling of wireless networks.
\end{biography}

\begin{biography}[{\includegraphics[width=1in,height=1.25in,clip,keepaspectratio]{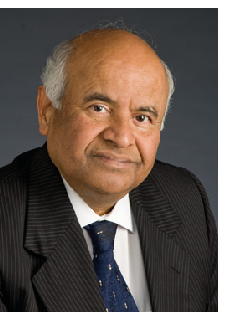}}] {Vijay K. Bhargava}
is a Professor in the Department of Electrical and Computer Engineering at the University of British Columbia in Vancouver, where he served as Department Head during 2003-2008. An active researcher, Vijay is currently leading a major R\&D program in Cognitive and Cooperative Wireless Communication Networks. He received his PhD from Queen's University in 1974. Vijay has held visiting appointments at Ecole Polytechnique de MontrŽal, NTT Research Lab, Tokyo Institute of Technology, Tohoku University and the Hong Kong University of Science and Technology. He is in the Institute for Scientific Information (ISI) Highly Cited list. 

Vijay is a co-author (with D. Haccoun, R. Matyas and P. Nuspl) of ``{\em Digital Communications by Satellite}" (New York: Wiley: 1981) which was translated in Chinese and Japanese. He is a co-editor (with S. Wicker) of ``{\em Reed Solomon Codes and their Applications}" (IEEE Press: 1994), a co-editor (with V. Poor, V. Tarokh and S. Yoon) of ``{\em Communications, Information and Network Security}" (Kluwer: 2003) a co-editor (with E. Hossain) of "Cognitive Wireless Communication Networks" (Springer: 2007), a co-editor (with E. Hossain and D.I Kim) of ``{\em Cooperative Wireless Communications Networks}", (Cambridge University Press: 2011) and a co-editor (with E. Hossain and G. Fettweis) of ``{\em Green Radio Communications Networks}, (Cambridge University Press: 2012)

Vijay is a Fellow of the IEEE, The Royal Society of Canada, The Canadian Academy of Engineering and the Engineering Institute of Canada. He is a Foreign Fellow of the National Academy of Engineering (India) and has served as a Distinguished Visiting Fellow of the Royal Academy of Engineering (U.K.). He has received numerous awards for his teaching, research and service.

Vijay has served on the Board of Governors of the IEEE Information Theory Society and the IEEE Communications Society. He has served as an Editor of the {\em IEEE Transactions on Communications}. He played a major role in the creation of the IEEE Communications and Networking Conference (WCNC) and {\em IEEE Transactions on Wireless Communications}, for which he served as the editor-in-chief during 2007, 2008 and 2009. He is a past President of the IEEE Information Theory Society and is currently serving as the President of the IEEE Communications Society. 
\end{biography}
\end{document}